\newtheorem{theorem}{Theorem}
\newtheorem{lemma}[theorem]{Lemma}
\newtheorem{definition}[theorem]{Definition}
\newcommand{\e}{{\rm e}}
\newcommand{\RR}{\mathbb{R}}
\newcommand{\N}{\mathcal{N}}
\newcommand{\id}{\mathbb{1}}
\newcommand{\mc}[1]{\mathcal{#1}}
\newcommand{\ketbra}[2]{| #1 \rangle \langle #2 |}
\newcommand{\proj}[1]{\vert #1\rangle\!\langle#1 \vert}
\def\id{{\mathbb I}}
\newcommand{\norm}[1]{\left\Vert #1 \right\Vert}
\def\f{{\rm f}}
\newcommand{\Tr}{\operatorname{tr}}
\newcommand{\tr}{\Tr}
\def\v{\mathbf{v}}
\def\b{\bm{\beta}}
\def\a{\bm{\alpha}}
\newcommand{\poly}[1]{\mathrm{poly}\left(#1\right)}
\newcommand{\x}{\mathbf{x}}
\begin{document}
\newcommand{\fu}{Dahlem Center for Complex Quantum Systems, Freie Universit{\"a}t Berlin, 14195 Berlin, Germany}
\author{Paul Boes}
 \author{Henrik Wilming}
 \author{Jens Eisert}
 \author{Rodrigo Gallego}

\affiliation{\fu}

\title{Statistical  ensembles without typicality}

\date{\today}

\begin{abstract}
Maximum-entropy ensembles are key primitives in statistical mechanics from which thermodynamic properties can be derived.
Over the decades, several approaches have been put forward in order to justify from minimal assumptions the use of these ensembles in statistical descriptions. However, there is still no full consensus on the precise reasoning justifying the use of
such ensembles.
In this work, we provide a new approach to derive maximum-entropy ensembles taking a strictly
operational perspective. We
investigate the set of possible transitions that a system can undergo together with an environment, when one only has partial information about both
the system and its environment. The set of all these allowed transitions encodes thermodynamic laws and limitations on thermodynamic tasks as particular cases. Our main result is that the set of allowed transitions coincides with the one possible if both system and
environment were assigned the maximum entropy state compatible with the partial information. This justifies the overwhelming success of such ensembles and provides a derivation without relying on considerations of typicality or information-theoretic measures.
\end{abstract}

\maketitle

Maximum-entropy ensembles, such as the microcanonical or the canonical ensemble, are the pillars on which statistical mechanics
rests. Given some partial information about a system, a vast set of  predictions about its behaviour can be derived by assigning to the system that statistical ensemble which maximizes the entropy compatible with the partial information.
Yet, in some ways this assignment may be seen as being  peculiar in that there exist many other possible physical states that are compatible with this information. 

The assignment of maximum-entropy ensembles is primarily justified by its undoubtable empirical success when it comes to an agreement with
experiment and observation. Thus, unsurprisingly, there has been much work aiming at providing theoretical grounds which explain its empirical success, going back to seminal work by Gibbs \cite{Gibbs1902Elementary}. The most successful general arguments justifying the use of ensembles -- both for classical and quantum systems -- are either based on specific assumptions of the microscopic interactions from which ergodicity can be derived (see Refs.\ \cite{Uffink,Haar} for a review on this approach and its conceptual problems), or based on the notion of typicality. The latter is the observation that the volume of pure quantum states (compatible with the information) that behave like a maximum-entropy ensemble is close to unity, with respect to a relevant measure on state space \cite{Goldstein2006Canonical,Tasaki,Popescu2006Entanglement}.
In these approaches, partially motivated by efforts in quantum thermodynamics \cite{XuerebReview,ThermoReview},
the aim is to show that the system at hand behaves like the ensemble in the precise sense that it will output the same measurement statistics for a restricted, but most realistic and relevant, set of observables. In this way, the agreement between experiments and the assignment of ensembles is justified, with the only notorious problem that the measure that produces the typicality is difficult to justify dynamically. There have been attempts to derive precisely the emergence of canonical ensembles
for most times from microscopic dynamical laws for common locally interacting quantum systems
(for reviews, see Refs.\ \cite{1408.5148,ngupta_Silva_Vengalattore_2011,christian_review}). However, it seems fair to say
that it is still not fully clear yet why the probability of a system being, at any (or most) times,
in a state should be described by this measure -- a state of affairs particularly significant in light of the
importance of this ensemble.

In this work, we provide a very different justification for the use of such ensembles. In contrast to the approaches mentioned before, our aim is not to derive that system's measurement statistics mimic those of the ensemble. Instead, we look at the possible state transitions that can be induced on a system from which one has only partial information (see also Refs.\ \cite{Rio2015,Kramer2016}). More precisely, we consider an initial system described only by partial information in the form of the expectation value of a set of observables. We pose the problem of finding the set of transitions that this initial system can undergo by evolving jointly with an environment when the state of this environment is itself known only partially, that is, up to expectation values with respect to a set of observables that correspond to those of the system. The environment plays the role of a usual heat bath and the set of transitions encode any possible task: extracting work, reaching a colder/warmer state, performing a computation or any other. Our main result is that, for \emph{any} initial state, the possible state transitions on such a system under partial information coincide exactly with those possible if the system and the environment were initially in the maximum-entropy ensemble state compatible with the partial information. This then not only justifies the use of the canonical ensemble to represent a system under partial information, it also allows one to derive the building blocks of phenomenological thermodynamics without assuming systems to be represented by this ensemble. In fact our results can be seen as a derivation of the Gibbs entropy and the Clausius inequality without a priori assigning equilibrium states to the systems involved.
Finally, since our results hold for any initial state, they do not suffer from the problem of typicality approaches mentioned above and allow us to avoid assumptions about the system's Hilbert-space dimension (apart from being finite). In particular, our results also hold for small, individual quantum systems.

\section{Motivating example}\label{sec:example}

\begin{figure*}[ht!b]
\includegraphics{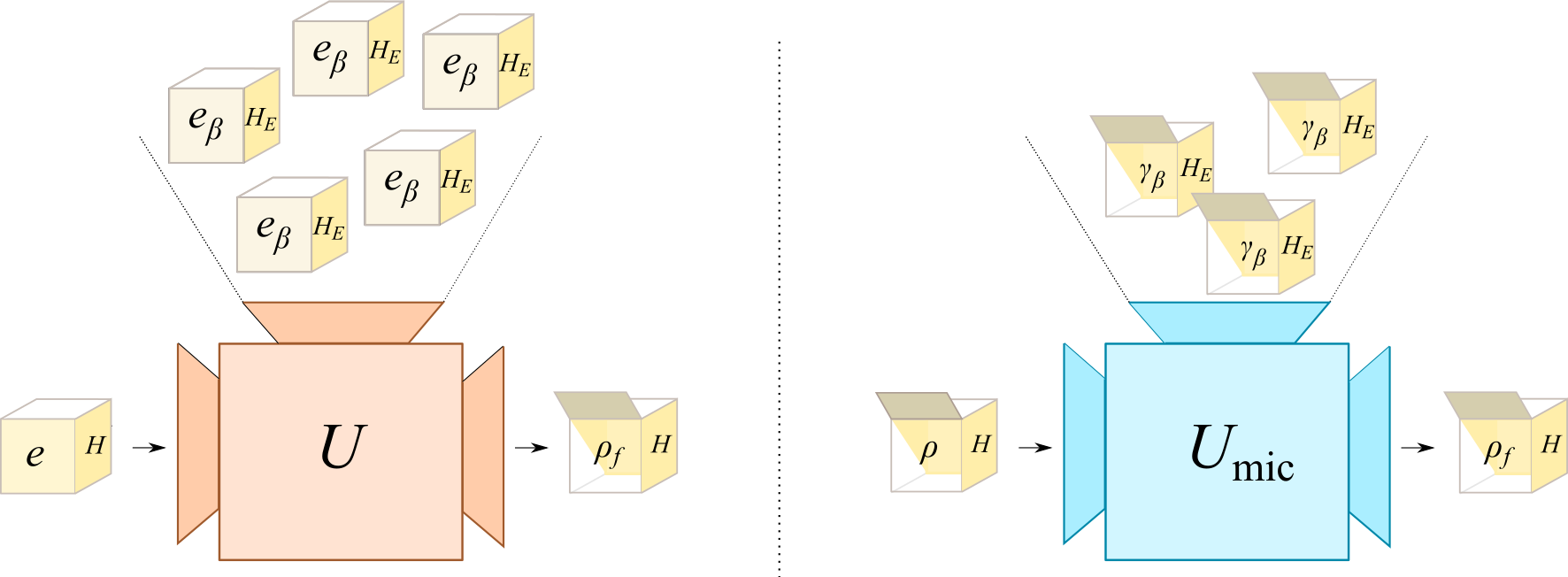}
\caption{Pictorical representation of the equivalence between scenarios a) with partial information (macrostate operations) and b) scenarios where systems are assigned the corresponding ensemble (microstate operations). Closed boxes represent systems from which we only know some partial information, in this case the mean energy. Inside the box there is the actual microstate unknown to us if the box is closed. Scenario a) shows the situation where one has an initial system of which only the mean energy $e$ is known and one can use any environment, being again limited to knowledge of its initial average energy $e_{\beta}$. The question is whether we can find a unitary $U$ that takes the two systems, regardless of what is actually inside of them, to one box for which we are certain that we will find inside the microstate $\rho_f$. The answer to this question is provided by scenario b), where the initial boxes of system and environment are both open (implying that we know what is the microstate) and populated with the maximum-entropy ensemble. $U$ exists if and only if there exists a unitary $U_{\mathrm{mic}}$ that implements the transition in b) when taking $\rho=\gamma_{e}(H)$. This shows that a thermodynamic transition is possible if and only if it is also possible under the assignment of ensembles to systems.}
\label{fig:operationaldepiction}
\end{figure*}

	We begin the presentation of our setting with a motivating example. Consider a small quantum system $S$ with Hamiltonian $H$ within an environment $E$ at temperature $T$ and with Hamiltonian $H_E$, that is, an environment in the canonical ensemble at that temperature and Hamilonian.

Given an initial quantum state $\rho \in D(\mc H)$ of the system, we can ask which final states of the system can be reached by coupling the system to the environment and evolving the joint system $SE$ in such a way that the global entropy and energy remain unchanged, if one assumes perfect control over both the environment Hamiltonian $H_E$ and the coupling, but for a fixed temperature $T$. Naturally, the answer to this question will strongly depend on the particular initial quantum state of $S$. For instance, the maximally mixed state $\rho=\id_{S}$ and an energy eigenstate $\rho' = \proj{E_i}$ will generally allow for very different state transitions. That is, there will exist some final state $\rho_f$ that can be reached by some entropy and energy preserving procedure $O$ from $\rho'$, while no such procedure exists for $\rho$. Call this scenario the microstate scenario, because here one has full information about the actual ``microstates'' -- \emph{i.e.} quantum states -- of the system and the environment.

	Suppose now that, instead of knowing the exact state of the system, one initially only knows its mean energy to be $e$ with respect to $H$. We capture this partial information in what we call a \emph{macrostate} of the form $(e,H)$. In this case, one can again ask which are the reachable states given that partial information. However, in this case the difficulty is that, in general, there will be many microstates compatible with this information. For instance, suppose that $(e,H)$ is compatible with both $\rho$ and $\rho'$. In this case $\rho_f$ cannot be reached anymore because there is at least one state -- $\rho$ in the previous example -- compatible with the initial information for which $\rho_f$ is unattainable. That said, one concludes that in order to reach some final state $\rho_f$, if only partial information about the initial state of $S$ is had,  one requires a \emph{single} operational procedure $O$ that takes $\emph{any}$ state compatible with the initial information to $\rho_f$. Note that this scenario is undesirably  asymmetric in that the system's state is represented by a macrostate $(e,H)$ (capturing our partial knowledge), while the environment microstate is fully known to be in the canonical ensemble at temperature $T$. Hence, one can go one step further and consider a situation in which not only does one only know the system's initial mean energy, but also the environment is described by a macrostate $(e_E,H_E)$. In this case, it becomes even more difficult to reach a given final microstate $\rho_f$, since now there has to exist a \emph{single} procedure $O$ that prepares $\rho_f$ from \emph{any} microstate of $S$ compatible with $e$ and \emph{any} environment microstate compatible with $e_E$. Indeed, it may seem that in general no transition is possible under these circumstances. At the same time, this scenario most accurately describes the situation that one in fact faces in phenomenological thermodynamics, where only coarse-grained information is had about both system and environment. Call this last scenario then the macrostate scenario, because here both system and environment are described by macrostates $(e,H)$ and $(e_E,H_E)$ respectively.

	The main result of this work is to show that, not only do there exist possible transitions in the macrostate scenario, moreover these transitions are fully characterized by assigning maximum-entropy ensembles to the macrostates involved: Under a natural model of operational procedures modelling thermodynamic transitions that we introduce below, given some value $e$, a final microstate $\rho_f$ can be reached in the macrostate scenario if and only if it can be reached in the microstate scenario from the canonical ensemble state of energy $e$. Since the canonical ensemble is moreover the only state for which this equivalence holds, this result provides an explanation for the important role that the canonical ensemble plays in statistical mechanics, a theory formulated in the microstate scenario, to describe phenomenological thermodynamics, a theory formulated in the macrostate scenario.
	\section{Setting and Results} % (fold)
\label{sec:results}

We now proceed to make the notion of the microstate- and macrostate scenario rigorous and introduce our model of thermodynamic transitions, \emph{i.e.} the transitions that a system $S$ can undergo together with an arbitrary environment at fixed temperature.

Consider a $d$-dimensional quantum system $S$ whose mean energy with respect to the Hamiltonian $H$ is known to be $e$. We refer to the pair $(e,H)$ as the ``macrostate'' of the system, as it corresponds to a state of coarse-grained information about the system. Note, however, that we do not assume that the system is macro\emph{scopic}, i.e. that $d \gg 1$. Every macrostate of the system corresponds to an equivalence class $[e]_H$ of ``microstates'' $\rho \in D(\mc H)$ of the system, namely all those density matrices whose mean energy with respect to $H$ is $e$, with $\mc E(\rho):= \Tr(\rho H)= e$. The canonical ensemble corresponding to a macrostate $(e,H)$ is then
\begin{align}\label{eq:canonicalensemble}
	\gamma_e(H) &:= \frac{e^{- \beta_S(e) H}}{\tr(e^{-\beta_S(e) H}) },
\end{align}
where $\beta_{S}(e)$ is chosen such that $\tr(\gamma_{e}(H) H)=e$. Note that, by construction, $\gamma_{e}$ is the maximum-entropy element in ${[e]}_{H}$ and exists for every macrostate.
As is clear from the example, in the following, we will often be concerned with making comparative statements about the microstate- and the macrostate scenarios. To simplify the presentation and highlight similarities between these scenarios, we now introduce the following convention: Let $M$ be any map acting on microstates. Then $M((e,H)):= M([e]_H)$ is the corresponding macrostate-level map. This notation will prove convenient in several ways. For instance, the requirement that an operation $O$ maps all the states $\rho$ compatible with $(e,H)$ into the state $\rho_f$ is simply expressed by
\begin{align}
O((e,H))=\rho_f.
\end{align}
Similarly, this notation can be also used to express operations on tensor products of macrostates. For instance, the expression \begin{align}\label{eq:reachable}
O( (e,H) \otimes (e_E,H_E))=\rho_f
\end{align}
implies that $O( \rho \otimes\rho_E)=\rho_f$ for all $\rho$ and $\rho_E$ compatible with $(e,H)$ and $(e_E,H_E)$ respectively.

\subsection{Thermodynamic operations on macrostates}\label{sec:definitionsmain}

Let us now describe and justify more precisely the form of a general macrostate operation as informally described in Section \ref{sec:example}. With these operations we aim at capturing in full generality \emph{any} possible transition that a system can undergo together with a heat bath. Hence, in order to describe an arbitrary macrostate operation, one is perfectly free to choose as an environment \emph{any} system of arbitrary Hilbert space dimension and with an arbitrary Hamiltonian $H_E$. As mentioned before, we do not assume that $E$ is in a canonical ensemble -- which would be fully determined by the inverse temperature $\beta:=(k_B T)^{-1}$, dimension and Hamiltonian -- but to have a partial description in terms of its average energy, thus assigning to it a macrostate $(e_E,H_E)$. We assume, as it is standard when considering thermodynamic operations
\cite{Horodecki2011Fundamental,Brandao2015Second,Brandao2013Resource}, that the system and the environment are initially uncorrelated, hence one initially possesses the macrostate compound $(e,H)\otimes (e_E,H_E)$. Naturally, the attachment of an uncorrelated environment can be iterated an arbitrary number of times, say $N$, bringing each time a new environment with an arbitrary dimension and Hamiltonian.

Moreover, since the macrostates provided by the environment model a bath, it is natural to assume that there exists a functional relationship between the environment Hamiltonian and the energy. In particular, we will assume this relationship to be that $e_E=e_{\beta}(H_E)$, where
\begin{equation}\label{eq:thermalenergy}
 e_{\beta}(H_{E}):=\tr\left( \frac{e^{-\beta  H_{E}}}{\tr(e^{-\beta H_{E}}) }H_{E}\right)
\end{equation}
is the thermal energy of a bath at inverse temperature $\beta$ and with Hamiltonian $H_E$. This assumption will be further discussed below. Dropping further the dependence on the Hamiltonian in \eqref{eq:thermalenergy} when it is clear from the context, the most general form of an initial macrostate then is of the form

\begin{equation}\label{eq:initial_macro}
(e,H) \bigotimes_{i=1}^{N} (e_\beta,H_{E^i}).
\end{equation}
Given this model of the environment, we now turn to the describing the
model of the joint evolution. Here,
we aim at modeling the isolated evolution of $SE$, in the sense that it preserves the energy and entropy of the compound. Regarding the energy, one has to take into account that only mean values of the energy are accessible, hence it is most reasonable to impose only that the mean energy is preserved \cite{Skrzypczyk2014work,Guryanova2016GGE,Halpern2015Beyond}, while noting that the mean energy must be preserved for all the initial microstates compatible with our initial macrostate \eqref{eq:initial_macro} (see Section \ref{sec:breakdown} for a thorough discussion and possible alternatives). Regarding entropy conservation, we enforce it by imposing a unitary evolution of the compound. We note, however, that our results also hold for larger set of operations such as probabilistic mixtures of unitaries or entropy non-decreasing operations, or even more generally, any set of operations that contains unitary evolutions as a particular case.

Let us now, for sake of clarity, enumerate the assumptions that come into play when describing macrostate operations:

\begin{itemize}
\item[i)] \textbf{Thermal energy environments:} Given an environment with Hamiltonian $H_E$, then the associated macrostate is given by $(e_{\beta}(H_E),H_E)$, where $e_{\beta}(H_E)$ is the thermal energy at reference temperature $T$.
\item[ii)] \textbf{Uncorrelated subsystems:} One can incorporate environments that are initially uncorrelated with the initial system.
\item[iii)] \textbf{Unitary evolution:} The compound $SE$ undergoes a unitary evolution.
\item[iv)] \textbf{Global mean energy conservation:} The unitary evolution of $SE$ is such its mean energy is preserved \emph{for all} the states (both of $S$ and $E$) compatible with our partial information.
\end{itemize}
Before turning to the formal definition of macrostate operations on the basis of these assumption, let us briefly comment on the assumption that environment macrostates have thermal energy \eqref{eq:thermalenergy}. Clearly, this amounts to assume that environment macrostates have the same mean energy as the canonical ensemble at inverse temperature $\beta>0$,
\begin{align}\label{eq:canonicalensemble2}
	\gamma_{\beta}(H_{E}) &:= \frac{e^{-\beta H_{E}}}{\tr(e^{-\beta H_{E}}) },
\end{align}
where we make the convenient abuse of notation of writing $\beta$ directly as the subindex, unlike \eqref{eq:canonicalensemble} where the mean energy was used instead \footnote{This is indeed unproblematic since $e$ and $\beta$ are in one to one correspondence, hence we will use $\beta$ or $e$ indistinctively when it is clear from the context.}.

We emphasize that \eqref{eq:thermalenergy} does not amount to assuming that the environment \emph{is} in the canonical ensemble -- which would beg the question by giving a prominent role to the canonical ensemble -- since many states other than the canonical ensemble fulfilling \eqref{eq:thermalenergy} exist. Nevertheless, assumption i) could raise the criticism that our further results -- the justification of ensembles -- rely on a seemingly arbitrary energy assignment for the macrostate of $E$, as given by \eqref{eq:thermalenergy}. However, we show in Appendix \ref{sec:trivializing} that \eqref{eq:thermalenergy} is the only possible assignment so that macrostate operations reflect indispensable features of thermodynamical operations. More precisely, we prove that \eqref{eq:thermalenergy} is the only energy function that does not allow one to extract an arbitrary amount of work from $E$ alone -- even if only partial information is given. Even more dramatically, it is the only energy function that does not trivialize macrostate operations, in the sense that any possible transition would be possible. Hence, \eqref{eq:thermalenergy} can be regarded as a necessary feature of an environment so that thermodynamic operations are sensibly accounted for in the formalism.

Finally, combining the notational convention for operations on macrostates, assumptions i-iv), and denoting the global mean energy as $
\mc{E}(\rho_{SE}):=\tr(\rho_{SE}H_{SE})$,
we define formally the set of macrostate operations with an environment at inverse temperature $\beta$:
\begin{definition}[Macrostate operations]\label{dfn:macro}
We say that $\rho_f$ can be reached by \emph{macrostate operations} from $(e,H)$, which we denote by
\begin{equation}
(e,H) \overset{\beta\text{-mac}}{\longrightarrow} \rho_f,
\end{equation}
if for any $\epsilon>0$ and $\epsilon'>0$ there exists an environment -- that is, a set of $N$ systems with
respective Hamiltonians $H_{E^i}$ -- and a unitary $U$ on $SE$, so that
\begin{equation}\label{eq:def_mac_op_gge}
 \rho_f \approx_\epsilon \tr_E\left( U \: (e,H) \bigotimes_{i=1}^{N} (e_{\beta},H_{E^i}) \: U^{\dagger}\right)
\end{equation}
while preserving the overall mean energy
\begin{equation}
\label{eq:energyconservation}
\mc{E}\bigg( U \: (e,H) \bigotimes_{i=1}^{N} (e_{\beta},H_{E^i}) \: U^{\dagger}\bigg)\approx_{\epsilon'}\mc{E}\bigg( (e,H) \bigotimes_{i=1}^{N} (e_{\beta},H_{E^i}) \bigg).
\end{equation}
\end{definition}

Here, we use $\approx_\epsilon$ to say that two quantities differ by at most $\epsilon$ in trace-norm,
or in absolute value for expectation values. Note that although we allow for errors $\epsilon,\epsilon'$ in the transition and in the mean energy conservation, those errors can be made arbitrarily small, hence it is for all practical purposes indistinguishable from an exact transition with exact mean energy conservation.
It also is important to stress again that, in the previous definition and following the notation introduced with Eq.\ \eqref{eq:reachable}, both \eqref{eq:def_mac_op_gge} and \eqref{eq:energyconservation} have to be fulfilled for all the microstates compatible with the macrostates appearing in those equations.
See Fig. \ref{fig:operationaldepiction} a) for a schematic description of macrostate operations as presented in Definition \ref{dfn:macro}.

\subsection{Thermodynamic operations on microstates and main result}
As stated before, our main result consists in showing that not only is the set of reachable microstates under macrostate operations in general non-empty, it can also be characterized exactly by the corresponding canonical ensembles. In order to be able to state this correspondence between macrostates and their canonical ensembles formally, we will now introduce \emph{microstate operations} as the corresponding model of thermodynamic transitions in the microstate scenario. These differ from macrostate operations only in that we assign a particular microstate to $S$ and $E$. In other words, microstate operations are the complete analogue of the operations in Definition\ \ref{dfn:macro}, but with full information about the actual quantum states involved. Hence, the conditions \eqref{eq:def_mac_op_gge} and \eqref{eq:energyconservation} are modified, for microstate operations, in that they have to be fulfilled for a single state and not for a set of states compatible with our knowledge.

\begin{definition}[Microstate operations]\label{dfn:micro}
We say that $\rho_f$ can be reached by \emph{microstate operations} from $\rho$, which we denote by
\begin{equation}
\rho \overset{\beta\text{-mic}}{\longrightarrow} \rho_f,
\end{equation}
if for any $\epsilon>0$ and $\epsilon'>0$ there exists an environment --that is, a set of $N$ systems with Hamiltonians $H_{E^i}$-- and a unitary $U$ on $SE$, so that
\begin{equation}
 \rho_f \approx_\epsilon \tr_E\left( U \: \rho \bigotimes_{i=1}^{N} \gamma_{\beta}(H_{E^i}) \: U^{\dagger}\right)
\end{equation}
while preserving the overall mean energy
\begin{equation}
\label{eq:energyconservationmicro}
\mc{E}\bigg( U \: \rho \bigotimes_{i=1}^{N} \gamma_{\beta}(H_{E^i}) \: U^{\dagger}\bigg)\approx_{\epsilon'}\mc{E}\bigg( \rho \bigotimes_{i=1}^{N}\gamma_{\beta}(H_{E^i}) \bigg).
\end{equation}
\end{definition}
An operationally inspired illustration of the two types of operations as well as of our result is provided in Fig.\  \ref{fig:operationaldepiction}.

\begin{figure*}[htb]
  \includegraphics{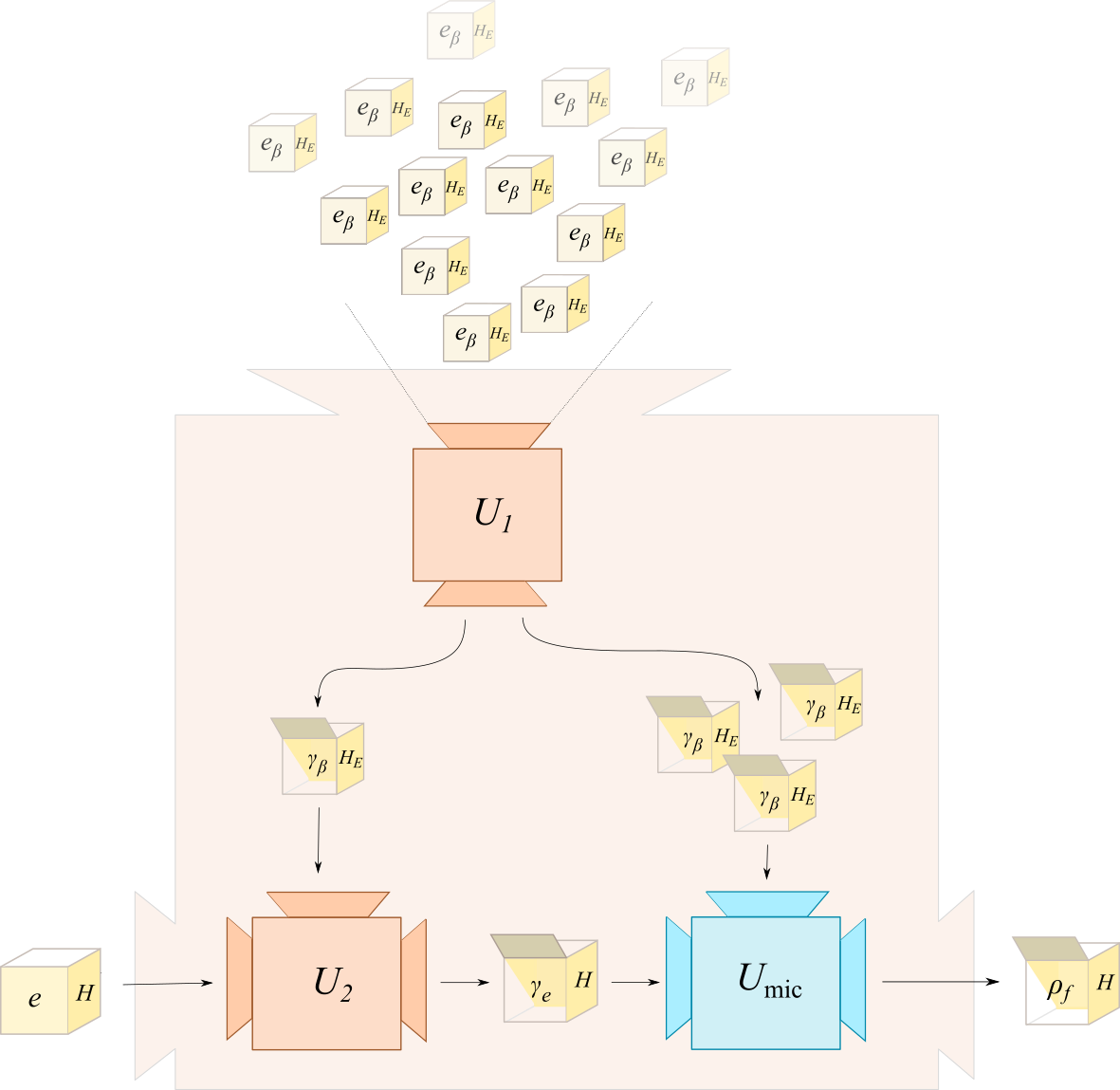}
  \caption{The figure sketches the proof of our main result. More particularly, we show how an operation of the form of Fig.\  \ref{fig:operationaldepiction} ii) can be used to build an operation of the form Fig.\  \ref{fig:operationaldepiction} i). This gives the direction $\Leftarrow$ in \eqref{eq:equiv} for the equivalence of Theorem \ref{thm:main} (the other direction is trivial, see Appendix \ref{app:main}). The construction has three sub-blocks: \textbf{Box ${\bf U_1}$} represents the fact that one can obtain the microstate $\gamma_\beta(H_E)$ to arbitrary precision from many copies of the macrostate $(e_\beta(H_E),H_E)$ using a macrostate operation (interestingly, this can be done with exact energy conservation). This result relies on a central limit theorem and typicality results for individual energy eigenspaces of many non-interacting systems. \textbf{Box ${\bf U_2}$} operates by choosing as $H_E$ as a rescaled version of $H$ and showing that one can then obtain the microstate $\gamma_{e}(H)$ using a macrostate operation. \textbf{Box ${\bf U_\text{mic}}$} exists by assumption: it uses the microstate operation to obtain $\rho_f$ from $\gamma_{e}(H)$ (it is the one represented in Fig.\  \ref{fig:operationaldepiction} ii)).}
  \label{fig:proof}
\end{figure*}

In this setup, we call a macrostate $(e,H)$ and a microstate $\rho$ \emph{operationally equivalent}, denoted as $(e,H) \sim_\beta \rho$, if
\begin{equation}
	(e,H) \overset{\beta\text{-mac}}{\to} \rho_f \Leftrightarrow \rho \overset{\beta\text{-mic}}{\to} \rho_f. \label{eq:equiv}
\end{equation}
Whenever a macrostate and a microstate are related by the equivalence $\sim_\beta$, then, concerning the possible thermodynamic transitions, they are equivalent descriptions of the system. We are now in a position to state our main result.

	 \begin{theorem}[Equivalence with the canonical ensemble] \label{thm:main}
	For any $\beta \neq 0$, the macrostate $(e,H)$ is operationally equivalent to the corresponding canonical ensemble compatible with the partial information $e$. That is,
	 	\begin{align}
	 	    (e,H) &\sim_\beta \gamma_{e} (H).
	 	\end{align}
	 \end{theorem}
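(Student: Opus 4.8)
The plan is to prove the two directions of the equivalence $(e,H)\sim_\beta\gamma_e(H)$ separately, exactly as encoded in \eqref{eq:equiv}. The direction $\gamma_e(H)\overset{\beta\text{-mic}}{\to}\rho_f \ \Rightarrow\ (e,H)\overset{\beta\text{-mac}}{\to}\rho_f$ is the substantive one, and the strategy is the one sketched in Fig.~\ref{fig:proof}: to simulate, by a purely macrostate operation, the preparation of the single microstate $\gamma_e(H)$, and then append the assumed microstate operation $U_{\mathrm{mic}}$. Concretely, I would build a composite unitary $U=U_{\mathrm{mic}}\,(U_2\otimes\id)\,(U_1\otimes\id)$ acting on $S$ together with a large environment, and argue that it realizes $(e,H)\otimes\bigotimes_i(e_\beta,H_{E^i})\mapsto\rho_f$ up to errors $\epsilon,\epsilon'$ that can be driven to zero by taking the environment large enough. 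The reverse direction, $(e,H)\overset{\beta\text{-mac}}{\to}\rho_f \ \Rightarrow\ \gamma_e(H)\overset{\beta\text{-mic}}{\to}\rho_f$, should be essentially immediate: since $\gamma_e(H)$ is one particular microstate compatible with the macrostate $(e,H)$, and since macrostate operations must work simultaneously for \emph{all} compatible microstates (and in particular conserve mean energy for all of them, which $\gamma_e(H)$ satisfies by construction since $\mc E(\gamma_e(H))=e$), any macrostate operation witnessing $(e,H)\to\rho_f$ is a fortiori a microstate operation witnessing $\gamma_e(H)\to\rho_f$; I would relegate this to Appendix~\ref{app:main}.

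For the hard direction I would organize the construction into the three blocks of Fig.~\ref{fig:proof}. \textbf{Block $U_1$:} show that from $M$ independent copies of the environment macrostate $(e_\beta(H_E),H_E)$ one can, by an energy-conserving unitary on the joint environment, produce a state arbitrarily close in trace norm to $\gamma_\beta(H_E)$ on a single designated copy. The idea is that the tensor product of $M$ arbitrary microstates each having mean energy $e_\beta(H_E)$, after a suitable energy-preserving unitary (e.g.\ a randomization or rearrangement over the degenerate-energy sectors of $H_E^{\otimes M}$), has a single-copy marginal that concentrates — by a central-limit / equivalence-of-ensembles argument applied to the energy distribution of $H_E^{\otimes M}$ — onto the microcanonical window at total energy $Me_\beta(H_E)$, whose single-site marginal is $\gamma_\beta(H_E)+o(1)$. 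This is where typicality of individual energy eigenspaces and a CLT for sums of i.i.d.\ bounded energies enter; the claim must hold uniformly over \emph{all} microstates compatible with the macrostate, which is the key subtlety. \textbf{Block $U_2$:} having access (to arbitrary precision) to the microstate $\gamma_\beta(H_E)$ for \emph{any} chosen $H_E$, pick $H_E=(\beta_S(e)/\beta)\,H$, a rescaled copy of the system Hamiltonian, so that $\gamma_\beta(H_E)=\gamma_{\beta_S(e)}(H)=\gamma_e(H)$; swapping this freshly prepared system-copy with $S$ (a unitary that conserves energy when the two Hamiltonians are proportional, up to handling the original content of $S$, whose mean energy is $e$ by the macrostate constraint) yields the microstate $\gamma_e(H)$ in register $S$. \textbf{Block $U_{\mathrm{mic}}$:} apply the assumed microstate operation to go from $\gamma_e(H)$ to $\rho_f$. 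Composing, and using a triangle-inequality / continuity bookkeeping for the accumulated trace-norm and mean-energy errors together with the fact that microstate operations themselves only require approximate energy conservation, gives $(e,H)\overset{\beta\text{-mac}}{\to}\rho_f$.

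The main obstacle I expect is Block $U_1$ done \emph{uniformly}: one must show that the single-copy marginal of $U_1\big(\bigotimes_{i=1}^M \sigma_i\big)U_1^\dagger$ is close to $\gamma_\beta(H_E)$ \emph{for every} choice of microstates $\sigma_i$ with $\Tr(\sigma_i H_E)=e_\beta(H_E)$ — not just on average or for typical inputs — while simultaneously conserving total energy exactly, and while being able to absorb the arbitrary (possibly adversarial) original state of $S$ into the bookkeeping. Making the concentration quantitative enough that the error vanishes as $M\to\infty$ regardless of the $\sigma_i$, and verifying that the required unitary can be chosen to act block-diagonally across energy sectors of $H_E^{\otimes M}$ so as to preserve energy strictly, is the technical heart; I would handle it by reducing to the diagonal (in the energy eigenbasis) parts of the $\sigma_i$ — which is all that matters for the energy-sector structure and for the marginal on commuting observables — and then invoking the equivalence of the microcanonical and canonical ensembles with explicit finite-$M$ error bounds. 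The remaining steps are comparatively routine continuity and composition arguments.
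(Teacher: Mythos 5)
Your proposal follows essentially the same route as the paper's proof in Appendix \ref{app:main}: the trivial direction is handled identically, and the substantive direction is built from the same three blocks sketched in Fig.~\ref{fig:proof} — distilling $\gamma_\beta(H_E)$ from many copies of the environment macrostate via a sector-wise randomization plus a concentration/method-of-types argument, swapping in a rescaled Hamiltonian $H_E=(\beta_S(e)/\beta)H$ so that $\gamma_\beta(H_E)=\gamma_e(H)$, and composing with the assumed microstate operation while tracking the errors. The one ingredient you leave implicit is the paper's preliminary step showing that the sector-wise randomization and the dephasing to diagonal states — which are mixtures of unitaries rather than the single unitary required by Definition~\ref{dfn:macro} — can themselves be realized as macrostate operations by extracting classical randomness from additional environment qubits whose macrostate constrains only their mean energy.
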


This theorem shows that, whenever the behaviour of a system under partial information concerns the possible thermodynamic transitions, a macrostate can be treated as if it was in its corresponding canonical ensemble, in the sense that they their behaviours coincide exactly. A sketch of the proof, for illustration of the idea, is given in Fig.~\ref{fig:proof}. The full proof appears in Appendix \ref{app:main}.

Lastly, let us note that all of the above, including the operations and the notion of operational equivalence, can straightforwardly be generalised to the more general case of a set $\mc Q =\{Q^j\}$ of $n$ commuting observables replacing $H$, a vector $\v$ of expectation values for each observable replacing $e$ and by now parametrising the environment by a vector of inverse ``temperatures'' $\b=(\beta^1,\ldots,\beta^n)$ encoding other intensive quantities. In this case, we obtain an operational equivalence between the macrostate $(\v,\mc{Q})$ and the corresponding maximum-entropy ensemble compatible with the partial information. More precisely, we obtain that, as long as $\beta^j \neq 0$ for all $j$,
\begin{align}
	 	    (\mathbf{v},\mc Q) &\sim_{\b} \gamma_{\v}(\mc Q),
\end{align}
where, in exact analogy to \eqref{eq:canonicalensemble}, $\gamma_{\v}(\mc Q)$ is the so-called
\emph{generalised Gibbs ensemble (GGE)}
\cite{Jaynes1957Information,christian_review,Rigol_etal08,Llobet2015Work,Halpern2016Microcanonical,Lostaglio2017}
\begin{equation}\label{eq:ggedef}
\gamma_{\v}(\mc Q):=\frac{e^{-\sum_j \beta^j_S(\v) Q^j}}{ \tr\left(e^{-\sum_j \beta^j_S(\v) Q^j}\right) } ,
\end{equation}
with $\beta^j_S(\v)$ being functions such that $\tr(Q^j \gamma_{\v}(\mc Q)) = v^j$.
The scenario and derivation is completely analogous to that yielding Theorem \ref{thm:main} and it is presented in Appendix \ref{sec:app:gge}.

\section{Discussion} % (fold)
\label{sec:discussion}

At a conceptual level, we regard as our main contribution the theoretical justification, from an operational perspective, for the common and empirically
extraordinarily well-supported use of the canonical ensembles in thermodynamics to describe systems in settings of partial information. The key step in this justification has been to prove a coincidence in behaviour with respect to thermodynamic transitions. The relevance of this coincidence is that many thermodynamic tasks and the laws of thermodynamics can ultimately be formulated as reflecting state transitions. To illustrate this statement,
we will now discuss how the equivalence on reachable states can be used to derive, as particular cases, common situations in thermodynamics such as work extraction, or more generally, quantitative versions of the second law of thermodynamics.

\subsection{Thermodynamic tasks as macrostate operations}
\label{sec:thermodynamic_tasks}

Let us consider the following task: One is given a system $S$ from which only the Hamiltonian and its mean energy $e$ are given. For instance, $S$ might be a burning fuel which one wants to use in a heat engine to perform work together with an environment. This common scenario is tackled in phenomenological thermodynamics by assigning to the system a temperature $T_S$ and to the environment a temperature $T$. The optimal amount of work that can be performed is simply given by the difference of free energies of $S$ during the process. Note that phenomenological thermodynamics operates at a level where only partial information --the thermodynamic variables -- are given about both the system and the environment. Furthermore, the operation of such a heat engine is effectively independent of the precise microstate that describes $S$ and $E$, exactly in the same spirit as that of
Definition\ \ref{dfn:macro}.

From the perspective of statistical mechanics, the assignment of a temperature $T_S$ and $T$ is understood as the assumption that both systems are in a canonical ensemble. Indeed, if we assume the system and the environment are initially in the state
\begin{equation}\label{eq:assigning_canonical_work_extraction}
\gamma_{e} \otimes \gamma_{\beta} :=\gamma_{e}(H) \otimes \gamma_{\beta}(H_E)
\end{equation}
one can formally derive limitations on the work $\Delta W$. The problem amounts to finding how much one can reduce the energy of the whole compound by \emph{any} unitary operation that does \emph{not} conserve the energy and assuming that all of the remaining energy can be extracted as work. One then obtains that this value is determined by the free energy as (see, e.g., Ref.\ \cite{Skrzypczyk2014work})
\begin{eqnarray}
\nonumber \Delta W^{\text{opt}}&:=&\max_{U,H_E} \left[\mc E(\gamma_{e} \otimes \gamma_{\beta}) - \mc E(U\gamma_{e} \otimes \gamma_{\beta}U^{\dagger}) \right]\\
\label{eq:maxworkfreeenergy}&= & \Delta \mc E_S - T \Delta \mc S_S:= \Delta \mc F_S,
\end{eqnarray}
where we denote the energy by $\mc E(\rho_{SE})= \tr (\rho_{SE}(H_S+H_E))$, $\Delta \mc E_S$ is the energy difference on $S$ and $\Delta \mc S_{S}$ is the difference of the von Neumann entropy on $S$. This yields the bound in terms of the free energy $\mc F_S=\mc E_S-\beta^{-1}\mc S_S$ of the system and it relies only on the first law of thermodynamics $\Delta \mc E_{SE} =-\Delta W$ and the prescription of canonical ensembles to the system and environment.

We will now show that one can use Theorem \ref{thm:main} to derive the bound \eqref{eq:maxworkfreeenergy} without relying on the assumption \eqref{eq:assigning_canonical_work_extraction} which assigns maximum entropy ensembles to the systems at hand. The system $S$, given the partial information, is described by the macrostate $(e,H)$. We also have at our disposal an environment in any macrostate of the form $\bigotimes (e_{\beta}(H_{E^i}),H_{E^i})$. The goal is to perform work with a protocol in such a way that it achieves this work extraction for all possible microstates in the respective equivalence classes, $[e]_H$ and $[e_{\beta}(H_{E^i})]_{H_{E^i}}$ for all $i$, in a similar way to the way the laws of phenomenological thermodynamics allow one to extract work regardless of the actual microstates of the systems involved. It is clear that
\begin{equation}
\gamma_{e}(H) \overset{\beta\text{-mic}}{\to} \gamma_{e}(H) \:\: \forall \: e,H.
\end{equation}
Hence, by invoking Theorem \ref{thm:main} one has also that
\begin{equation}
\begin{split}\label{eq:workmac}
(e,H) &\overset{\beta\text{-mac}}{\to}\gamma_{e}(H) ,\\
(e_{\beta}(H_E),H_E) &\overset{\beta\text{-mac}}{\to} \gamma_{\beta}(H) .
\end{split}
\end{equation}
Once we have the system $S$ and the environment $E$ in the states of at the r.h.s.\ of  \eqref{eq:workmac}, we simply apply the unitary achieving the maximum in Eq.\ \eqref{eq:maxworkfreeenergy}. In this way an amount of work given by $\Delta \mc F_S$ is extracted. The fact that this is the optimal possible value that works for all microstates in $[e]_H$ is trivial, since the work extraction has to be successfully implemented if the system is given is in the state $\gamma_{e}(H) \in [e]_H$, for which the optimal value is $\Delta \mc F_S$ as given by Eq.\ \eqref{eq:maxworkfreeenergy}.

We conclude then that the optimal work that can be extracted from a system and an environment, from which we only know their mean energy, coincides precisely with the optimal work when system and environment are described by their corresponding canonical ensemble. A completely analogous argument applies to any other conceivable task that can be formulated as concerning state transitions between microstates, both thermodynamically but also, and more generally, tasks with other conserved quantities.

\subsection{The second law of thermodynamics and the Clausius inequality}
Now we show that the second law of thermodynamics can be recovered by using Theorem \ref{thm:main}. More particularly, we show that the set of achievable states $\rho_f$ that can be reached by a transition of the form
\begin{equation}\label{eq:mactransitionlaws}
 (e,H) \overset{\beta\text{-mac}}{\to} \rho_f
\end{equation}
can be determined only by merely taking into account the free energy $\mc F$. First note that by Theorem \ref{thm:main} the set of achievable $\rho_f$  coincides with those that can be achieved by microstate operations of the form
\begin{equation}
 \gamma_{e}(H) \overset{\beta\text{-mic}}{\to} \rho_f .
\end{equation}
The set of achievable states by microstate operations has been investigated in Ref.\  \cite{Skrzypczyk2014work}, where it is shown that the transition is possible if and only if the free energy decreases. Hence, we arrive at the second law of the form
\begin{equation}\label{eq:macsecondlaw}
 (e,H) \overset{\beta\text{-mac}}{\to} \rho_f \Leftrightarrow \mc{F}(\gamma_{e}(H) )\geq \mc F (\rho_f).
\end{equation}
Importantly, this result can also be seen as a derivation of the free energy as a state function $F(e,H)$ on macrostates, by setting $F(e,H) = \mc F(\gamma_{e}(H))$. Since the energy is already naturally defined for macrostates we then also obtain the derived Gibbs entropy
\begin{align}
S(e,H) := T(e - F(e,H)).
\end{align}
Interpreting the change of energy on the system as heat $\Delta Q$, we see that a transition between macrostates using macrostate operations is possible if and only if
\begin{align}
\Delta Q \geq T \Delta S.
\end{align}
We thus find that a state-transition between macrostates is possible if and only if the \emph{Clausius inequality}
is fulfilled.

Lastly, we highlight that a generalisation of the same results for the case of multiple commuting observables is possible combining in a similar fashion Theorem \ref{thm:gge} (App.~\ref{sec:app:gge}) with the results of \cite{Guryanova2016GGE} to arrive at a
formulation of the \emph{second law} of the form
\begin{equation}
 (\v,\mc Q) \overset{\beta\text{-mac}}{\to} \rho_f \Leftrightarrow \mc{G}(\gamma_{\v)}(\mc{Q}) )\geq \mc G (\rho_f)
\end{equation}
where $\mc{G}$ is the so called free entropy defined as
\begin{equation}
\mc{G}(\rho)=\sum_{j} \beta_j \tr (\rho \: Q^j) - \mc S (\rho). \label{eq:free_entropy}
\end{equation}

\subsection{Comparison with existing work}

There exist several complementary approaches to justify the use of or single out maximum-entropy states in thermodynamics. As stated already in the introduction, the novelty of our approach lies in specifically assigning ensembles based on the set of possible thermodynamic transitions. This is in contrast with previous approaches, where canonical ensembles are justified based on measurement statistics of relevant observables. Both perspectives -- the one presented here and previous approaches -- can be fairly incorporated in a more general formulation about what is meant by a justification of the use of ensembles: the representation of a system's state by a statistical ensemble is justified with respect to some property if one can, on reasonable grounds, derive that the ensemble and the state behave exactly the same with respect to this property.
Approaches based on notions of typicality usually consider as system states pure quantum states and the measurement statistics of some restricted set of observables -- often local observables -- as the property to be reproduced by the ensembles \cite{Goldstein2006Canonical,Popescu2006Entanglement}.
In contrast, in the present work, the system states are macrostates of partial information and the property is with respect to achievable state transitions under thermodynamic evolution. Theorem \ref{thm:main} justifies the assignment of maximum-entropy ensembles to macrostates with respect to such transitions. Macrostates are arguably the most common state assignment in thermodynamics,
being at the root of discusssions of the link of statistical mechanics and phenomenological thermodynamics,
in that one often has knowledge of a system's state only up to its expectation values. Hence,
this result provides a very broad operational justification of the use of maximum-entropy ensembles for a plethora of
thermodynamical processes.

Another aspect that distinguishes our approach from other notions based on typicality is that we do not need to introduce a measure on quantum states or make any particular assumption on the dynamics. More precisely, known approaches based on typicality  consider a given subset of quantum states and show that measurement statistics coincide with those of the ensemble for \emph{most} of the quantum states within the subset. However, there is no general argument to advocate that one will find in nature precisely those states for which the statistics resemble those of the ensemble, even though these states comprise the vast majority according to reasonable measures. In contrast, one of the main features of our results is that it works for all and not for most of the quantum states that are compatible with the partial information. First, we demand that the transitions from macrostates, as given abstractly by \eqref{eq:reachable}, reach $\rho_f$ for \emph{all} the states compatible with the partial information. It would be analogous to the notion of typicality if we would instead demand that $\rho_f$ is reached only from \emph{most} of the microstates according to some state measure, but this is actually not required to derive our main results. Secondly, the equivalence between the macrostate and the corresponding ensemble holds for \emph{all} possible macrostates, instead of just for a vast majority of the macrostate according to some measure on the possible values of the partial information. Most importantly, we stress that the equivalence between the macrostate and the ensemble holds \emph{irrespectively of the system's dimension}. To put it in more practical terms, our results imply that a system, even if made of a few qubits, behaves as if it was in its maximum entropy ensemble when it comes to state transitions under joint evolution with a possibly large bath. This is true in a \emph{single-shot} regime -- considering transitions on a single copy of the system at hand -- without having to rely on taking the thermodynamic limit where transitions of large number of copies are considered instead \cite{Sparaciari2016,Bera2017}.

Lastly, it may seem that our approach is closely related to that of the famous \emph{Jaynes' principle} according to which a system should always be assigned the maximum-entropy state consistent with what one knows about it \cite{Jaynes1957Information,Jaynes1957Informationa}. What both approaches have in common is that they consider the question of assigning microstates to macrostates. However, apart from this they differ considerably: Jaynes motivates his principle on the basis of Shannon's findings about the uniqueness of the \emph{Shannon entropy} as an asymptotic measure of information. In contrast, our approach does not require us to assume any privileged measure of information, or even rely on any consideration about information measures at all. Moreover, as noted in the preceding paragraph, our approach also makes no reference to an asymptotic setting. Instead, in our work, we define a task on an individual system and investigate how an experimenter's partial knowledge about the system impacts her ability to execute this task. The canonical ensemble then naturally emerges as an effective representation of the experimenter's operational abilities in this setting. Again, no recourse to a measure of information, average performance, or even a subjectivist account of probabilities is required in our setting.

\subsection{Operational equivalence breaks for exact energy conservation} \label{sec:breakdown}

Theorem \ref{thm:main} establishes the operational equivalence between macrostates and their corresponding maximum-entropy ensembles based, among others, on assumption iv) in Section  \ref{sec:results}, where it is assumed that the mean value of the energy is preserved. In this section, we consider the stronger case in which assumption iv) is replaced by assuming \emph{exact} energy conservation in the following sense:
\begin{itemize}
\item[iv')] The unitary evolution $U$ commutes with the total Hamiltonian,
\begin{equation}\label{eq:exactcommutation}
[U,H_S+H_E]=0.
\end{equation}
\end{itemize}
We define, equivalently to the results of Section  \ref{sec:results}, macrostate and microstate operations, but with exact preservation of the energy. We say that $\rho_f$ can be reached by \emph{commuting macrostate operations} from the macrostate $(e , H)$, similarly to Definition\ \ref{dfn:macro}, but imposing, instead of mean energy conservation as in Eq.\ \eqref{eq:energyconservation}, the condition \eqref{eq:exactcommutation}. One can define, analogously, \emph{commuting microstate operations} by imposing similarly Eq.\ \eqref{eq:exactcommutation} and a notion of operational equivalence $\overset{\text{c}}{\sim}_\beta$ analogous to \eqref{eq:equiv}.

In Appendix \ref{app:linear}, we show that for every $\beta$ and non-trivial $H$, there exists at least one initial value $e$, such that
\begin{equation}\label{eq:breakdown}
(e , H)\: \overset{\text{c}}{\nsim}_\beta \:\gamma_{e}(H).
\end{equation}
We believe the proof of this result to be interesting in its own right, because in it we show that the maps produced by commuting macrostate operations admit a simple linear characterization, the details of which are discussed in the appendix. Again, an analogous breakdown of the equivalence as given by \eqref{eq:breakdown} exists for several commuting observables.

With respect to the justification of the use of maximum-entropy ensembles, this result implies that one \emph{cannot} justify, in general, assigning a maximum-entropy state to a system under partial information by means of considering the possible thermodynamic transitions in a setting of exact energy conservation. This, we submit, again confirms current practice, because canonical ensembles are rarely used in situations where full control is had over the microdynamics of a system. Moreover, note that from an operational point of view the setting of commuting macrostate operations appears unnatural, because in it one assumes that an experimenter has no access to the microstate information at the level of the systems,  while having full microstate level control over the operations that she implements.

\subsection{The macroscopic limit}

In the light of the inequivalence of macrostates and their respective ensembles for the case of exact commutation, it is interesting to quantify by how much one has to violate \eqref{eq:exactcommutation} in order to recover equivalence. For this, let us introduce the random variable $X$ which quantifies the energy change of $SE$ during a macrostate operation. This energy change is captured by a probability distribution $P$.
Theorem~\ref{thm:main} implies the equivalence between the macrostate $(e,H)$ and its corresponding ensemble with macrostate operations. These preserve the mean energy of the compound, hence with vanishing value of the first moment of $P$, although higher moments could well be different from zero. On the other hand, in the case of commuting macrostate operations, all the higher moments of $P$ would indeed vanish due to condition \eqref{eq:exactcommutation}. Hence, the deviation from zero of the higher moments of $P$ seems a sensible quantifier of the violation of \eqref{eq:exactcommutation}. 

We will now discuss the behaviour of these higher moments for large, non-interacting and independent systems, capturing the classical limit of macroscopic systems. To do so, consider a system $S$ described by $N$ non-interacting subsystems. We will consider macrostate operations between a macrostate $(e,H)$ and a final state $\rho_f$ and impose that the final and initial states are large and uncorrelated. That is, instead of being any microstate in $[e]_H$, the initial microstate takes the form $\sigma=\bigotimes_i^{N} \sigma^i$. We also assume that the final state takes a similar form $\rho_f=\bigotimes_i^N \rho_f^i$. Using standard arguments of central limit theorems one can show that, in the limit of large $N$ and for bounded Hamiltonians, $P(X)$ for the transition $(e,H) \overset{\text{mac}}{\rightarrow} \rho_f$ converges in distribution to a normal distribution with variance scaling as $\sqrt{N}$. Hence, the higher moments of $P(X)$ \emph{per particle} vanish (see Appendix \ref{app:macro}).
This is an argument in favour of the assignment of the ensemble to macrostates, for large weakly-correlated systems, as long as one tolerates violations of \eqref{eq:exactcommutation} -- as measured by the higher moments -- that are negligible in comparison with the typical energy scales involved in the thermodynamic operation.

\section{Conclusion} % (fold)
\label{sec:conclusion}
In this work, we have introduced a fresh way of justifying the very common use of maximum-entropy ensembles as a representation of the state of systems. We take a strictly operational stance to the subject, in which an experimenter
has only partial information about the microstate of the system and all
operations have to be compatible with such partial information. The vantage point for our argument concerns the possible thermodynamic transitions that systems can possibly undergo. This approach has the key advantages
that it (a) naturally fits with many operational tasks in thermodynamics and its laws and (b) does not require underlying typicality arguments, and hence avoids some of their conceptual issues.
We have also shown how our results can be used to derive features of phenomenological thermodynamics, such as the Gibbs entropy, free energy as state functions and the Clausius inequality, which determines whether a state transition on macrostates is possible without investing non-equilibrium resources. We are thus able to derive fundamental thermodynamic results without any assumption about typicality or information measures. Finally, our results generalise to the setting of several commuting observables. As such, the results here are likely to be of interest for thermodynamics in generalised settings or even outside the context of thermodynamics.

\section{Acknowledgements} % (fold)
\label{sec:acknowledgements}
We thank H.\ Tasaki for comments.
This work has been supported by the ERC (TAQ), the DFG (GA 2184/2-1, CRC 183, B02), the
Studienstiftung des Deutschen Volkes, the EU (AQuS),
and the COST action MP1209 on quantum thermodynamics.

\bibliographystyle{unsrt}
\bibliography{partial_inf.bib}

\appendix

 \section{General maximum entropy ensembles}\label{sec:app:gge}
In this section we generalize the formalism laid out in Section  \ref{sec:definitionsmain} to the case of many conserved quantities. That is, the macrostate and microstate operations and the notion of operational equivalence are generalised to the more general case of a set $\{Q^j\}$ of $n$ commuting observables replacing $H$, and a set $\{v^j\}$ of expectation values for each observable replacing $e$. We introduce the following notation to arrange these sets into vectors
\begin{eqnarray}
\mc Q = ( Q^1, \ldots, Q^n),\\
\v= (v^1,\ldots, v^n),
\end{eqnarray}
so the macrostate of the system is given by $(\v,\mc Q)$. The equivalence class of quantum states compatible with the macrostate is denoted by $[\v]_{\mc Q}$.

We model the environment with an analogous assumption as i) in the main text, but for the case of more conserved quantities. We assume that one can have access to $N$ uncorrelated subsystems described each by a macrostate. The mean value of the conserved quantities is determined by the value of a vector of inverse ``temperatures'' $\b=(\beta_1,\ldots,\beta_n)$ for each conserved quantity. We denote, say for subsystem $E^l$, the conserved quantities and mean values as
\begin{eqnarray}
\mc Q_{E^l} &=& ( Q^1_{E^l}, \ldots, Q^j_{E^l}),\\
\v_{\b}(\mc Q_{E^l})&=& \big(v_{\b}(Q^1_{E^l}),\ldots, v_{\b}(Q^j_{E^l})\big),
\end{eqnarray}
where we are making the slight abuse of notation to identify
\begin{equation}
Q^1_{E^l} \equiv \id_1 \otimes \cdots \otimes Q^1_{E^l} \otimes \cdots \otimes \id_N.
\end{equation}
In this way, we will denote the $j$-th conserved quantity on the whole environment as $Q_{E}^j=\sum_{l=1}^N Q_{E^l}^j$. Note that $Q_{E}^j$ plays a similar role as the Hamiltonian of the environment $H_E$ in the main text, but in this case for a different conserved quantity. Accordingly we also can arrange the conserved quantities of the environment, and the compound $SE$ in a vector as
\begin{eqnarray}
\mc Q_E &=& (Q_{E}^1, \ldots, Q_E^n) ,\\
\mc Q_{SE} &=& (Q^1+Q_{E}^1, \ldots, Q^n +Q_E^n).
\end{eqnarray}

The environment is modeled by any macrostate of the form $\bigotimes_{l=1}^N (\v_{\b}(\mc Q_{E^l}),\mc Q_{E^l})$ where, in analogy to equation \eqref{eq:thermalenergy}, we assign a mean value of the conserved quantities equal to the ``thermal'' value, which in this case corresponds to the value that a maximum-entropy ensemble takes. That is,
\begin{equation}\label{eq:macrostate_environment}
v_{\b}(Q^j_{E^l})=\tr\left( \gamma_{\b}(\mc Q_{E^l} )\:Q^j_{E^l} \right),
\end{equation}
where $\gamma_{\b}$ is the so-called \emph{generalised Gibbs ensemble} defined as
\begin{equation}
\gamma_{\b}(\mc Q_{E^l}):=\frac{e^{-\sum_j \beta_j Q^j_{E^l}}}{ \tr\left(e^{-\sum_j \beta_j Q^j_{E^l}}\right) }.
\end{equation}
We are now in a position to introduce macrostate operations.

\begin{definition}[Macrostate operations with many charges] \label{def:macrogge}We say that $\rho_f$ can be reached by macrostate operations from $(\v,\mc Q)$, which we denote by
\begin{equation}
(\v,\mc Q)  \overset{\b\text{-mac}}{\to} \rho_f,
\end{equation}
 if for any $\epsilon>0$ and $\epsilon'>0$ there exist an environment with observables $\mc Q_E$, and a unitary on $SE$ such that
\begin{equation}\label{eq:transitiongge}
\left\| \tr_E (U (\rho_i \otimes \rho_{E^1} \otimes \cdots \otimes \rho_{E^m}) U^{\dagger})-\rho_f\right\|_1 \leq \epsilon,
\end{equation}
while preserving the global value of all the charges
\begin{equation}\label{eq:conservationgge}
\left|\Tr\left(U (\rho_i \bigotimes_{l=1}^N \rho_{E^l} ) U^{\dagger}\: Q_{SE}^j\right)
-\Tr\left( \rho_i \bigotimes_{l=1}^N \rho_{E^l} \:Q_{SE}^j\right)\right| \leq \epsilon',
\end{equation}
for all $j=1,\ldots,n$. Importantly, both \eqref{eq:transitiongge} and \eqref{eq:conservationgge} have to be fulfilled for all the states of $S$ and $E$ compatible with our partial information, that is,
\begin{equation}\label{eq:all_states_in_macro}
\forall \rho_i \in \nonumber[\v]_{\mc Q}, \: \rho_{E^l} \in [\v_{\b} (\mc Q_{E^l} )]_{\mc Q_{E^l}} \text{ for } l\in[1,\ldots,N].
\end{equation}
\end{definition}
{At this point, it is worth briefly discussing the physical significance of $\mc Q$ and $\mc Q_E$. Our framework and in particular our main result -- \emph{i.e.} the equivalence with the maximum entropy ensemble presented in Theorem \ref{thm:gge} -- apply for any choice of charges for $S$ and the environment $E$, given by $\mc Q$ and $\mc Q_{E}$ respectively, as long as the total mean value of the compound is preserved.  In this sense our results leave open and completely general the choice of conserved quantities. However, one must be cautious by noting that imposing a conservation law of the mean value of $\mc Q+\mc Q_{E}$ is not always well-justified.
For instance, when $\mc Q$ are the Hamiltonian, angular momentum and number of particles, it makes sense to allow for environments where $\mc Q_{E^l}$ are the Hamiltonian, angular momentum and number of particles of $E^l$ respectively. In this scenario, imposing \eqref{eq:conservationgge} is meaningful. On the contrary if we take $\mc Q$ to be the angular momentum and $\mc Q_{E}$ to be, say, the magnetisation, we find that it might be in general unjustified to impose a conservation of $\mc Q +\mc Q_{E}$, since those two quantities are, a priori, unrelated. In summary, our framework takes as a starting point that a conservation law is imposed and builds upon this law. The prior arguments that justify imposing such a conservation law are outside the scope of this paper and must be considered independently.
}

The definition of $\rho  \overset{\b\text{-mic}}{\to} \rho_f$ is completely analogous to the case of the previous section, with the GGE ensemble \eqref{eq:ggedef} playing the role of the canonical ensemble.

\begin{definition}[Microstate operations with many charges]\label{def:microgge} We say that
$\rho_f$ can be reached from $\rho_i$ by microstate operations, which we denote by
\begin{equation}
\rho_i  \overset{\b\text{-mic}}{\to} \rho_f,
\end{equation} if for any $\epsilon>0$ and $\epsilon'>0$ there exist an environment with observables $\mc Q_E$ and a unitary on $SE$ such that
\begin{equation}
\left\|\left( U (\rho_i \otimes \gamma_{\b}(Q_E) ) U^{\dagger}\right)-\rho_f \right\|_1 \leq \epsilon,
\end{equation}
while preserving the overall value of the charges
\begin{equation}
  \left|\Tr\left(U (\rho_i \otimes \gamma_{\b}(Q_E) )U^{\dagger}Q_{SE}^j\right)
  -\Tr\left( \rho_i \otimes \gamma_{\b}(Q_E) \:Q_{SE}^j\right)\right| \leq \epsilon',
\end{equation}
for all $j=1,\ldots,n$.
\end{definition}
We can now formulate the main result for the case of multiple observables:

	 \begin{theorem}[Equivalence with the GGE] \label{thm:gge}
	 Let $\mc Q$ be any set of commuting observables and the environment be such that $\beta^j \neq 0$ for all $j$. The macrostate $(\mathbf{v},\mc Q)$ is operationally equivalent to the corresponding GGE ensemble compatible with the partial information $\v$. That is,
	 	\begin{align}
	 	    (\mathbf{v},\mc Q) &\sim_{\b} \gamma_{\v}(\mc Q), \label{eq:main_gge}
	 	\end{align}
	 	where $\v$ are the inverse Lagrange multipliers that one assigns to $S$ so that $\tr(Q^j\gamma_{\v}(\mc Q))=v^j$ for all $j$.
	 \end{theorem}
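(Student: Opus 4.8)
The plan is to mirror the three-block argument used for Theorem~\ref{thm:main} (sketched in Fig.~\ref{fig:proof} and carried out in Appendix~\ref{app:main}), replacing the single energy by the vector of commuting charges $\mc Q$ throughout. Only the implication ``$\gamma_{\v}(\mc Q)\overset{\b\text{-mic}}{\to}\rho_f\ \Rightarrow\ (\v,\mc Q)\overset{\b\text{-mac}}{\to}\rho_f$'' requires work; the converse is immediate. Indeed, a macrostate operation as in Definition~\ref{def:macrogge} succeeds, in particular, when $S$ is prepared in the microstate $\gamma_{\v}(\mc Q)\in[\v]_{\mc Q}$ and each $E^l$ in $\gamma_{\b}(\mc Q_{E^l})$, which by \eqref{eq:macrostate_environment} lies in $[\v_{\b}(\mc Q_{E^l})]_{\mc Q_{E^l}}$; since $\bigotimes_l\gamma_{\b}(\mc Q_{E^l})=\gamma_{\b}(\mc Q_E)$ for the combined environment $E$, regrouping the $E^l$ into a single system yields a microstate operation in the sense of Definition~\ref{def:microgge}.

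For the forward direction I would concatenate three unitary blocks acting on $S$ together with several families of environment subsystems. Block $U_1$ distils, from sufficiently many copies of an environment macrostate $(\v_{\b}(\mc Q_{E'}),\mc Q_{E'})$, the GGE $\gamma_{\b}(\mc Q_{E'})$ on a designated subsystem $E'$ to arbitrary trace-norm accuracy, by a unitary that conserves each of the $n$ mean charges (in fact exactly) and does so for \emph{every} microstate compatible with the input macrostates. Block $U_2$ then prepares $\gamma_{\v}(\mc Q)$ on $S$: one takes $E'$ with $\mc H_{E'}\cong\mc H_S$ and charges $Q^j_{E'}=\lambda_j Q^j$, where $\lambda_j=\beta^j_S(\v)/\beta_j$ — here the hypothesis $\beta_j\neq 0$ is precisely what makes this well-defined — so that $\gamma_{\b}(\mc Q_{E'})=\gamma_{\v}(\mc Q)$ as a state on $\mc H_S$; one distils this state on $E'$ using $U_1$ and then swaps $S\leftrightarrow E'$. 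The swap is independent of the unknown initial state of $S$, hence valid for all microstates in $[\v]_{\mc Q}$, and since $\rho_S$ and $\gamma_{\v}(\mc Q)$ share the same mean charges $v^j$ a short computation shows every mean charge of the compound is conserved. Block $U_\text{mic}$ is the given microstate operation taking $\gamma_{\v}(\mc Q)$ to $\rho_f$, whose environment GGEs $\gamma_{\b}(\mc Q_{G^i})$ are in turn synthesized by further $U_1$-type macrostate operations. Composing the three blocks gives a single unitary implementing $(\v,\mc Q)\otimes(\text{environment macrostates})\to\rho_f\otimes(\text{discarded registers})$; tracing out the discarded registers returns $\rho_f$ up to the sum of $\epsilon$ and arbitrarily small distillation errors, and each mean charge is conserved up to $\epsilon'$ and those same distillation errors, so Definition~\ref{def:macrogge} is met.

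The hard part is the multi-charge version of block $U_1$. I would take $U_1$ block-diagonal with respect to the joint eigenspaces of the commuting operators $Q^j_E=\sum_l Q^j_{E^l}$ — this is exactly where commutativity of $\mc Q$ enters, so that simultaneous eigenspaces, and hence simultaneous exact conservation of all $n$ charges, make sense — acting within each joint-charge eigenspace as a Haar-random (or suitably pseudorandom) unitary. Two estimates must then be lifted from the scalar to the vector setting: first, a multivariate concentration/central-limit statement, that any product input $\bigotimes_l\rho_{E^l}$ with per-copy mean charges $\v_{\b}(\mc Q_{E^l})$ has all but a vanishing fraction of its weight on joint-charge eigenspaces whose charges-per-copy lie in an arbitrarily small neighbourhood of $\v_{\b}$ once $M$ is large; second, a multi-charge equivalence-of-ensembles estimate, that the microcanonical state on such a joint-charge shell reduces, on any fixed number of subsystems, to a tensor power of $\gamma_{\b}$. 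Given these, a Page-type argument shows the reduced state on $E'$ after $U_1$ is close to $\gamma_{\b}(\mc Q_{E'})$ for every pure input supported on a relevant shell, and linearity and convexity extend this to arbitrary mixed, shell-spread inputs. I expect the genuine effort to lie in establishing these two estimates in terms of the joint spectral data of the $Q^j$, and in tracking the accumulation of errors across the three concatenated blocks and the repeated internal uses of $U_1$ inside $U_2$ and $U_\text{mic}$; everything else runs parallel to the single-charge proof of Theorem~\ref{thm:main}.
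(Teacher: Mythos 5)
Your proposal follows essentially the same route as the paper's Appendix~\ref{app:main}: the trivial direction is identified correctly, and the forward direction is built from the same three blocks --- distillation of $\gamma_{\b}(\mc Q_{E^l})$ from many environment macrostates via random unitaries acting within the joint eigenspaces of the commuting total charges, preparation of $\gamma_{\v}(\mc Q)$ on $S$ by choosing rescaled charges $Q^j_{E'}=(\beta^j_S(\v)/\beta_j)Q^j$ (where $\beta_j\neq 0$ enters) followed by a swap, and then the assumed microstate operation. The two estimates you defer are precisely what the paper supplies (Hoeffding concentration onto joint-charge shells and a method-of-types equivalence-of-ensembles bound using Stirling and the strict concavity of the Shannon entropy), with the only cosmetic differences being that the paper first establishes that mixtures of unitaries and dephasing to diagonal states are free, rather than invoking a Page-type typicality argument for a single random unitary.
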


\section{Proof of Theorems \ref{thm:main} and \ref{thm:gge}} % (fold)
\label{app:main}
In this section, we will prove Theorem~\ref{thm:gge}, which implies Theorem~\ref{thm:main} as a special case. The equivalence relation \eqref{eq:main_gge} requires showing that
\begin{equation} \label{eq:equivalencewelookfor}
  (\v,\mc Q) \overset{\b\text{-mac}}{\to} \rho_f \Leftrightarrow \gamma_{\v}(\mc Q) \overset{\b\text{-mic}}{\to} \rho_f.
\end{equation}
The direction ``$\Rightarrow$'' is trivial. Note that the l.h.s. implies that the transition is possible for all initial states compatible with $(\v,\mc Q)$. In particular, $\gamma_{\v}(\mc Q)$ is one of these states compatible with $(\v,\mc Q)$ and hence the r.h.s.\ condition follows.

Before embarking on the proof of the direction ``$\Leftarrow$'', we will provide an overview of the different steps involved:
\begin{enumerate}
\item We show that macrostate operations allow us to consider without loss of generality probabilistic mixtures of unitary operations as well.

\item We show that using probabilistic mixtures of unitaries, we can reduce the problem to only considering microstates which are diagonal in the basis of the conserved quantities.

\item Using the previous results we show that we can "distill", from the environment described by partial information, systems for which we are certain that they are in the microstates given by the GGE to arbitrary accuracy and with arbitrarily little change of the charges. This shows that we can effectively describe the environment by GGE microstates directly.

\item We show that once we have an environment directly described by GGE microstates, we can always bring the system to the GGE microstate corresponding to its macrostate. That is, we show that it is possible to implement the transition
\begin{equation}\label{eq:transitiontothermal}
(\v,\mc Q) \overset{\b\text{-mac}}{\to}\gamma_{\v}(\mc Q).
\end{equation}
\end{enumerate}
Finally, after we have replaced the state on the system with the GGE by a macrostate operation, we can apply the microstate operation that maps the GGE to the desired final state (r.h.s.\ of \eqref{eq:equivalencewelookfor} which is the premise of the proof). That is, we compose macrostate operations and microstate operations in the following way:
\begin{equation}\label{eq:compositionrule}
  (\v,\mc Q) \overset{\b\text{-mac}}{\to} \rho \:\: \land \:\: \rho \overset{\b\text{-mic}}{\to} \sigma \Rightarrow (\v,\mc Q) \overset{\b\text{-mac}}{\to} \sigma.
\end{equation}
By taking $\rho= \gamma_{\v}(\mc Q)$ and $\sigma=\rho_f$ and using \eqref{eq:transitiontothermal} we obtain the direction ``$\Leftarrow$'' of \eqref{eq:equivalencewelookfor} which concludes the proof.
We will now give detailed derivations of steps 1.-4. separately.

\subsection{Mixtures of unitaries}\label{sec:mixture_unitaries}
We will now show that instead of considering unitary operations for macrostate operations, for finite temperature environments, we can also use probabilistic mixtures of unitaries. The basic idea is to use systems from the environment, described by the macrostate $\bigotimes_{l=1}^N (\v_{\b}(\mc Q_{E^l}),\mc Q_{E^l})$, as a source of randomness.

Suppose we want to act with a mixture of unitaries on some $m$ systems at hand (which might include other systems from the environment). To do that, we first take two additional systems out of the environment. We choose these subsystems to be qubits with $\mc{Q}_{E^l}=(H,\id,\ldots ,\id)$. That is, we only consider the energy as a conserved quantity. Let us re-scale their Hamiltonian so that we can write it as $H=0 \ketbra{0}{0} +\Delta\ketbra{1}{1}$. As the macrostates have energy $e_{\beta}(H)$, this determines that $[e_{\beta}(H)]_H$ is formed by states with $\tr(\rho \ketbra{0}{0}):=p_0$, $\tr(\rho \ketbra{1}{1}):=p_1$ with $p_1=e_{\beta}(H)/\Delta$. Let us choose $\Delta$ so that $p_0=1/\sqrt{2}$.

We now apply to the $m$ subsystems the unitary
\begin{equation}
 U=\ketbra{0,0}{0,0} \otimes U_{\text{rest}} + (\ketbra{0,1}{0,1}+\ketbra{1,0}{1,0}+\ketbra{1,0}{1,0})\otimes U'_{\text{rest}}
\end{equation}
where ``rest'' refers to the $m$ subsystems upon which we want to apply the mixture of unitaries. One obtains that the effective map on the $m$ systems is
\begin{eqnarray}
\rho&\mapsto& \mc{M}(\rho) =\tr_{\text{2-qub}}( U \rho U^{\dagger})\\
\nonumber &=&p_{0,0} U_{\text{rest}} \rho U^{\dagger}_{\text{rest}} + (p_{0,1}+p_{1,0}+p_{1,1}) U'_{\text{rest}}
\rho U'^{\dagger}_{\text{rest}}\\
\nonumber &=&(p_0)^2 U_{\text{rest}} \rho U^{\dagger}_{\text{rest}} + (1-(p_0)^2) U'_{\text{rest}}
\rho U'^{\dagger}_{\text{rest}}\\
\nonumber &=&\frac{1}{2} U_{\text{rest}} \rho U^{\dagger}_{\text{rest}} +\frac{1}{2} U'_{\text{rest}}
\rho U'^{\dagger}_{\text{rest}}.
\end{eqnarray}
Repeating this process with as many pairs of qubits as required, we can apply apply any mixture of unitaries that we need. Hence, we can assume without loss of generality that in order to perform a macrostate operation as given by Definition\ \ref{def:macrogge}, it suffices to find, instead of a single unitary $U$ on the $SE$ compound, a mixture of unitaries that performs the desired transition, which we denote as
\begin{equation}\label{eq:unitarymap}
\rho\mapsto \mc U ( \rho) =\sum_\lambda p_\lambda U_\lambda \rho U_\lambda^{\dagger},
\end{equation}
with each of $U_\lambda$ preserving the mean value of the conserved quantities.

\subsection{Reducing the problem to diagonal microstates}\label{sec:reducing_diagonal}
We now show that by being able to implement mixtures of energy-preserving unitaries, we can reduce the problem to one
in which all microstates are diagonal in the eigenbasis of all the conserved quantities. To do that, define for every operator $Q^j$ the mixture of unitaries
\begin{align}\label{eq:decoherence_map}
\rho\mapsto \mc D_{Q^j}(\rho) := \lim_{T\rightarrow \infty} \frac{1}{T}\int_{0}^T \e^{\mathrm{i}Q^j t}\rho\e^{-\mathrm{i}Q^j t}\, \mathrm{d}t.
\end{align}
This mixture of unitaries dephases every state in the eigenbasis of $Q_j$. Since all the $Q_j$ commute, we can sequentially apply these maps to map any state $\rho\in [\v]_{\mc Q}$ to a state that commutes with all $Q^j$. In the following, we will denote this set of microstates that are diagonal in the eigenbasis of all the $Q^j$ and correspond to the macrostate $(\v,\mc Q)$ by $[\v]_{\mc Q}^\mathrm{diag}$. The fact that we can dephase all states without changing the mean values $v^j$ implies that condition Eq.~\eqref{eq:all_states_in_macro} of Definition~\ref{def:macrogge} can be relaxed to diagonal states, i.e.,
\begin{equation}
\forall \rho_i \in \nonumber[\v]_{\mc Q}^\mathrm{diag}, \: \rho_{E^l} \in [\v_{\b} (\mc Q_{E^l} )]_{\mc Q_{E^l}}^\mathrm{diag} \text{ for } l\in[1,\ldots,N].
\end{equation}
This allows us to restrict to diagonal states in the last two steps (3. and 4.).

\subsection{From the macrostate environment to the  maximum entropy environment}
\label{sec:macrotomicro}
The macrostate operations and the microstate operations employ different models of the environment. As discussed in the main text (see Section  \ref{sec:app:gge} for the generalisation for many conserved quantities),
the environment for macrostate operations is given by macrostates of the form
\begin{equation}\label{eq:macrobath}
\bigotimes_{l=1}^{N} (\v_{\b}(\mc Q_{E^l}),\mc Q_{E^l}).
\end{equation}
On the other hand, for microstate operations one assumes that the environment is given by maximum entropy ensembles of the form
\begin{equation}
\bigotimes_{l=1}^{N'} \gamma_{\b}(\mc{Q}_{E^l}). \label{eq:microbath}
\end{equation}
We will now show that any environment of the form \eqref{eq:microbath} can always be ``distilled'' from an environment of the form \eqref{eq:macrobath}. That is, for any $N'$ one can always find a sufficiently large $N$ so that a system of the form \eqref{eq:microbath} is obtained.

Due to the results of Section  \ref{sec:mixture_unitaries} and \ref{sec:reducing_diagonal} we can, without loss of generality, model the macrostate operations that achieve this distillation by mixtures of unitaries that act on diagonal states of the bath, requiring only that they preserve the total expectation values of all the observables. For simplicity, we will take $N'=1$, since an extension to larger values of $N'$ can be done by simply repeating the process over $N'$ copies of \eqref{eq:macrobath}.

For purely technical reasons, we will for now consider the special case where the eigenvalues of all the conserved quantities $Q_{E^l}^j$ have rational eigenvalues. Since any operator can be approximated to arbitrary accuracy by one with rational eigenvalues, this is not a severe restriction.

Consider a larger number $N$ of identical environment systems in the same macrostate $(\v_\beta(\mc Q_{E^l}),\mc Q_E^{l})$, where $\mc Q_{E^l}=\mc Q_{E^{l'}}$ for all $l,l'=1,\ldots,N$. We will apply a unitary map $\mc{U}$ of the form \eqref{eq:unitarymap} and find that the reduced state on every subsystem is given by $\gamma_{\b(\mc Q_{E^l})}$ to arbitrary accuracy as $N\rightarrow \infty$.

We first have to set up some notation. A basis-state on one of the subsystems can be labelled by the eigenvalues $q^j_\alpha$ of the $n$ conserved quantities $Q^j_{E^l}$, where $\alpha=1,\ldots,d_{E^l}(j)$ and $j=1,\ldots,n$. Here, $d_{E^l}(j)$ is the number of \emph{distinct} eigenvalues of $Q^j_{E^l}$. Simplifying the notation, the basis states on system $E^l$ can thus be labeled by $d$ vectors $\a^x = (\alpha^x_1,\ldots,\alpha^x_n)$ corresponding to the choice of eigenvalues $q^j_{\alpha^x_j}$. A basis-state for the $N$ systems is then given by choosing one vector $\a^x$ for each subsystem and is denoted by $\a^{\x} = (\a^{x_1},\ldots,\a^{x_N})$. We will label the joint-eigenspaces of the $Q^j_E$ on the $N$ systems by $\Pi_\xi$ and identify also $\Pi_\xi$ with the projector onto that eigenspace. Given an eigenspace $\Pi_\xi$, we finally denote the corresponding eigenvalue of the total charge $Q^j_E$ as $q^j_{E,\xi}$.

After setting up the notation, we will now start with the actual proof. The operation that we consider is very simple: We simply apply a completely random unitary in each of the subspaces $\Pi_\xi$. This operation clearly commutes with the total charges, hence it also preserves its average value. If we denote the total probability of subspace $\Pi_\xi$ by $p_\xi$, it leaves the whole distribution $p_\xi$ invariant, while leaving each of the subspaces in the maximally mixed state $\Omega_\xi$. Since each of the subspaces is permutation invariant, we find that the state of every system is finally described by the same density matrix
\begin{align}
\rho'_{E^l} = \sum_{\xi} p_\xi \Tr_{\overline{l}}(\Omega_\xi).
\end{align}
Since the \emph{initial} state $\otimes_l \rho_{E^l}$ is uncorrelated, the total weight of joint eigenspaces $\Pi_\xi$ for which any of the eigenvalues $q^j_{E,\xi}$ deviates by more than $O(\sqrt{N})$ from $N v^j_\beta$ is exponentially small (by Hoeffding's inequality). We will collect the remaining subspaces in a set $\mc M$. We thus have
\begin{align}
\rho'_{E^l} = \sum_{\xi\in\mc M}p_\xi \Tr_{\overline{l}}(\Omega_\xi) + \epsilon_N\sigma,
\end{align}
where $\sigma$ is some density-matrix and $\epsilon_N$ goes to zero exponentially with $N$. Note that for all $\xi \in \mc M$ the corresponding eigenvalues fulfill
\begin{align}
|q^j_{E,\xi}/N  - \v_{\b}^j| \leq \delta^j_{N},\quad \delta^j_N\overset{N\rightarrow\infty}{\longrightarrow}0.
\end{align}

We will now show that, as $N\rightarrow \infty$, the reduced state on any single subsystem of each of the maximally mixed states $\Omega_\xi$, with $\xi\in\mc M$, approaches the GGE. To see this pick any such subspace $\Pi_\xi$. The fact that the eigenvalues $q^j_\alpha$ are all rational, together with the fact that $\xi\in\mc M$ implies that the dimension of any such subspace becomes arbitrarily large with increasing $N$.

Now consider the basis vectors $\a^\x=(\a^{x_1},\ldots,\a^{x_N})$ in $\Pi_\xi$. We will associate to each such basis vector a \emph{type}
\begin{align}
T( \a^{\x} ) = \left(\frac{k_1}{N},\ldots,\frac{k_d}{N} \right),
\end{align}
where $k_x$ is the number of subsystems in state $\a^x$. In other words, they fulfill $\sum_x k_x=N$ and
\begin{align}
  \sum_{x=1}^{d_{E^l}(j)} k_x q^j_{\alpha^x_j} = q^j_{E,\xi}.
\end{align}

The number of basis vectors corresponding to the same type $T$ is given by
\begin{equation}
	\#T = \frac{N!}{\prod_{x=1}^d k_d}.
\end{equation}
It can be bounded using Stirling's approximation as
\begin{align}
\sqrt{2\pi} \poly{N}\e^{N S(T)} \leq \#T  \leq \e\, \poly{N}\e^{N S(T)},\nonumber
\end{align}
where $S(T)=S(k_1/N,\ldots,k_d/N)$ is the Shannon-entropy of a type. Note that the total dimension of one eigenspace $\Pi_\xi$ is simply given by
\begin{align}
d(\Pi_\xi) = \sum_{T\in \Pi_\xi} \# T.
\end{align}
A type has the property that $T_x=k_x/N \geq 0$ and $\sum_{x=1}^d k_x/N = 1$. It can hence be interpreted as a probability distribution. If the total system is in the state $\Omega_\xi$, we obtain from permutation invariance that the probability to find the $l$-th subsystem in state $\a^x$ is given by
\begin{align}
  p^\xi_{E^l}(\a^x) = \frac{\sum_{T\in\Pi_\xi} T_x \# T}{\sum_{T\in \Pi_\xi} \# T}.
\end{align}

We will now show that all types that differ from the GGE-distribution by more than $\delta$ (in some norm on $\RR^{d-(n+1)}$) have a relative weight that vanishes as $N\rightarrow \infty$. In other words, as we increase the system size, the probability distribution  $p^\xi_{E^l}(\a^x)$ converges to that of a GGE with $\v^j = q^j_{E,\xi}/N$. Let us denote the probability distribution corresponding to the GGE in subspace $\xi$ by $\gamma_\xi$. Since the Shannon entropy is concave and has a unique maximum among all probability distributions compatible with the expectation values of the conserved quantities $Q^j_{E^l}$ corresponding to the subspace $\xi$, we can bound the entropy of any type that differs by more than $\delta$ from $\gamma_\xi$ as
\begin{align}
S(\gamma_\xi) - K'\delta^2 \leq S(T) \leq S(\gamma_\xi) - K\delta^2,
\end{align}
where the constants $K$ and $K'$ do not depend on $N$.

We thus see that the weight of the type is
\begin{align}
\sqrt{2\pi} \poly{N} \e^{N S(\gamma_\xi)-N K'\delta^2}&\leq \# T\nonumber\\& \leq \e\, \poly{N}\e^{N S(\gamma_\xi)-N K\delta^2 }.\nonumber
\end{align}
Hence, the weight of the types is distributed according to a Gaussian-distribution on a subset of $\RR^{d-(n+1)}$ with variance $\sigma^2$ of order $1/N$. For large $N$, it is thus very sharply peaked around the Gibbs-distribution and we can choose $\delta$ to go to $0$ as $N\rightarrow \infty$  while at the same time most of the weight of the distribution is carried by distribution within $\delta$ away from the GGE distribution. Choose, for example, $\delta = N^{1/4} \sigma$, so that
\begin{align}
\lim_{N\rightarrow \infty} N^{1/4}\sigma = \lim_{N\rightarrow \infty} N^{1/4-1/2} = \lim_{n\rightarrow \infty} N^{-1/4} = 0.
\end{align}
More formally, we can upper bound the total weight of types more than $\delta$ away from the GGE distribution by
\begin{align}
  \sum_{\substack{T\in \Pi_\xi, \\ \norm{T-\gamma_\xi}_1\geq \delta}} \# T \leq \mc T_\xi\, e\, \poly{N}\e^{N S(\gamma_\xi)-N K\delta^2 },
\end{align}
where $\mc T_\xi$ is the total number of different types appearing in subspace $\Pi_\xi$. Similarly, for any $q<1$ we can lower bound the total weight of types closer than $q\delta$ to the GGE distribution by
\begin{align}
\sum_{\substack{T\in \Pi_\xi, \\  \norm{T-\gamma_\xi}_1\leq q \delta}} \# T \geq \poly{q\delta} \mc T_\xi\, \sqrt{2\pi}\, \poly{n}\e^{N S(\gamma_\xi)-N K'q^2\delta^2 }.\nonumber
\end{align}
The relative volume of the two is then given by (using $\delta = N^{-1/4}$)
\begin{widetext}
\begin{align}
\frac{\e\, \poly{N} \e^{N S(\gamma_\xi) - NK\delta^2 }}{\sqrt{2\pi}\poly{q\delta} \poly{N} \e^{N S(\gamma_\xi) - NK'\delta^2 q^2}} &= \frac{\e\, \poly{N} \e^{N S(\gamma_\xi) - \sqrt{N}K }}{\sqrt{2\pi}\poly{q N^{-1/4}} \poly{N} \e^{N S(\gamma_\xi) - \sqrt{N}K' q^2}}\\& \leq K'' \poly{N} \e^{-\sqrt{N}(K-K'q^2)} \rightarrow 0,\nonumber
\end{align}
\end{widetext}
for $q< \sqrt{K/K'}$.
As $N\rightarrow \infty$, we therefore find that
\begin{align}
\lim_{N\rightarrow \infty} \tr_{\overline{l}}\left(\Omega_\xi\right) &= \lim_{n\rightarrow \infty} \sum_{x} p^\xi_{E^l}(\a^x) \ketbra{\a^x}{\a^x} \nonumber\\
&= \lim_{N\rightarrow \infty} \gamma_{{\b}_\xi}(\mc Q_{E^l})\nonumber\\
&= \gamma_{\b}(\mc Q_{E^l}),
\end{align}
where ${\b}_{\xi}$ is the vector of "inverse temperatures" corresponding to the subspace $\Pi_\xi$ and in the last line we have used that $\lim_N q^j_{E,\xi}/N = {\v}_{\b}^j$ for all $\xi \in \mc M$. Since this holds for all subspaces in $\mc M$, we finally obtain the desired result that
\begin{align}
\rho'_{E^l} = \sum_{\xi\in\mc M} p_\xi \tr_{\overline{l}}(\Omega_\xi)   + \epsilon_N \sigma \overset{N\rightarrow\infty}{\longrightarrow} \gamma_{\b}(\mc Q_{E^l}).
\end{align}
Concluding, we have shown that by taking many copies of the macrostate $(\v_{\b},\mc Q)$ and applying an exactly energy-conserving operation, we can prepare the microstate $\gamma_{\b}(\mc Q)$. Repeating this process many times, we can then also prepare any environment of the form
\begin{align}
\bigotimes_l \gamma_{\b}(\mc Q_{E^l}).
\end{align}

\subsection{Bringing the system to the maximum entropy state using the maximum entropy environment}
In the last section we have proven that, from the model of the environment given by \eqref{eq:macrobath} for the definition of macrostate operations, one can distill a microstate environment of the form \eqref{eq:microbath}. We will now use such an environment to bring the system to the maximum entropy state. That is, to perform the transition \eqref{eq:transitiontothermal}. The idea to do that is very simple: We choose the right conserved quantities $Q_E$ on the environment and then simply swap the system state with the environment.

Suppose that the system is in macrostate $(\v, \mc Q)$ with conserved quantities $Q^j$ and let the corresponding inverse temperatures given by $\gamma_{\v}(\mc Q)$ be given by $\beta_j(\v)$.
Now choose the following conserved quantities on the environment,
\begin{align}
  Q^j_E = \frac{\beta_j({\v})}{\beta_j} Q^j.
\end{align}
Of course, this is possible only if $\beta_j \neq 0$ for all $j$. Then the two density matrices of the GGEs coincide, $\gamma_{\b}(\mc Q_E) = \gamma_{{\b}({\v})}(\mc Q)$, and hence the total charge is conserved on average as the two states are swapped (it is not conserved exactly, since the microstate on the system can be any microstate in $[\v]_{\mc Q}$). As mentioned in the previous section, the above reasoning strictly speaking only applies if the eigenvalues of $Q^j_E$ are rational. However, we can always approximate $Q^j_E$ by an operator with rational eigenvalues to arbitrary precision. In this case, the average charge conservation is fulfilled with arbitrary precision as well.

\section{Non-Gibbsian average energies trivialize thermodynamics}
\label{sec:trivializing}
In this section, we will show that the assignment of macrostates to the environment as in Eq.\ \eqref{eq:thermalenergy} is the only one that does not lead to i) arbitrary work extraction from the environment and ii) trivial macrostate operations, in the sense that any transition is possible. For this, we will analyse the consequences of having an assignment of energies given by $\f(H)$ different from $e_{\beta}(H)$ as given by \eqref{eq:thermalenergy}. For simplicity we will discuss it for the case of the energy as a single conserved quantity, since the argument is fully analogous for the case of other conserved quantities.

Let us first show i). The function $f$ can, without loss of generality, be always expressed as $f(H)=e_{\beta(H)}(H)$, where now $\beta(H)$ is not a fixed value but a function of the Hamiltonian. For the situation to not be equivalent to some fixed inverse temperature, at least two Hamiltonians must have different temperatures, i.e., there exist Hamiltonians $H_1\neq H_2$ such that $\beta(H_1) \neq \beta(H_2)$. For simplicity let us write $\beta_j=\beta(H_j)$ in the following. Given any value of $\beta_j$ we can repeat the argument of Section  \ref{sec:macrotomicro} and distill, from a large number of macrostates of the environment, one canonical ensemble at temperature $\beta_j$. That is, from an environment of the form
\begin{align}
\bigotimes_{j=1}^{N_1} (e_{\beta_1}(H_1),H_1)\bigotimes_{j=1}^{N_2} (e_{\beta_2}(H_2),H_2)
\end{align}
one can obtain systems in the microstate
\begin{align}\label{eq:twobaths}
\gamma_{\beta_1}(H_1)^{\otimes N'_1}\otimes \gamma_{\beta_2}(H_2)^{\otimes N'_2}
\end{align}
with $N'_1$ and $N'_2$ arbitrarily large for sufficiently large $N_1$ and $N_2$. Once we possess two systems in the canonical ensemble at different inverse temperatures $\beta_1$ and $\beta_2$, one can trivially extract work. That is, one could reduce the mean energy of \eqref{eq:twobaths} and accumulate it in a work storage device. This is true since for some value for $N'_1$ and $N'_2$ \eqref{eq:twobaths} will cease to be a passive state \cite{Pusz1978Passive}.

The previous considerations imply trivially ii). Once we have established that the environment could be used to extract an arbitrary amount of work --mean energy--, one can invest this energy in creating an arbitrary state \cite{Skrzypczyk2014work}. Hence one finds that if $f(H)$ is not the thermal energy, then
\begin{align}
 (e,H) \overset{\text{$\beta$-mac}}{\rightarrow} \rho.
\end{align}
is possible for any $\rho$.

Altogether, we conclude that imposing that i) or ii) are impossible implies that $f(H)=e_{\beta}(H)$ for a fixed $\beta$. In other words, there only exist specific families of functions, one for each value of $\beta$, that do not lead to trivial macrostate operations or work extraction from the environment. In this way the assignment of a parameter $\beta$ to the environment follows from those basic principles. Importantly, note that the parameter $\beta$ is in principle not related to any prior assignment of a temperature to the environment. For the sake of simplicity, we refer to $\beta$ as the inverse temperature, but the interpretation of $\beta$ as related to a prior value of $T$ as $\beta=(k_B T)^{-1}$ is not necessary to derive Theorem~\ref{thm:main} or any of the results in this work.
In summary, we conclude that the only thermodynamically consistent way to assign average energies to environment systems is by assigning the energies corresponding to a thermal Gibbs state for some parameter $\beta$ playing the role of an inverse temperature.

\section{Breakdown of equivalence under exact energy conservation}\label{app:linear}
In this section, we will prove the inequivalence between macrostates and their corresponding maximum-entropy ensemble when exact energy conservation, iv') in Sec.~\ref{sec:breakdown}, is imposed. In particular, we show that for every $\beta$ and non-trivial $H$, there exists at least one initial value $e$, such that
\begin{equation}\label{eq:notequivalent}
(e , H)\: \overset{\text{c}}{\nsim}_\beta \:\gamma_{e}(H).
\end{equation}

Let us first introduce some notation. We define \emph{commuting macrostate operations}, denoted by
\begin{equation}
(e,H) \overset{\beta\text{-c-mac}}{\rightarrow} \rho_f,
\end{equation}
 similarly to Definition\ \ref{dfn:macro} but replacing condition \eqref{eq:energyconservation} by $[U,H_{SE}]=0$. In a similar fashion, we define \emph{commuting microstate operations}, denoted by
 \begin{equation}
\rho \overset{\beta\text{-c-mic}}{\rightarrow} \rho_f,
\end{equation}
similarly to Definition\ \ref{dfn:micro} but replacing condition \eqref{eq:energyconservation} by $[U,H_{SE}]=0$. Commuting microstate operations are in the literature discussed as ``thermal operations''
\cite{Brandao2015Second,Horodecki2011Fundamental}. Proving the inequivalence \eqref{eq:notequivalent} amounts to finding one microstate $\sigma$ so that
\begin{eqnarray}
(e,H) &\overset{\beta\text{-c-mac}}{\not\rightarrow}& \sigma, \\
\gamma_e(H) &\overset{\beta\text{-c-mic}}{\rightarrow}& \sigma.
\end{eqnarray}
The existence of such a state $\sigma$ is implied by the fact that, for any $H$ that admits non-trivial equivalence classes, $|[e]_H|>1$, and any $\beta$, there exists at least one initial energy $e$ such that
	 \begin{equation}\label{eq:max_linear_vs_thermal}
\max_{(e,H) \overset{\beta\text{-c-mac}}{\rightarrow} \rho_f} \mc E (\rho_f) < \max_{\gamma_e(H) \overset{\beta\text{-c-mac}}{\rightarrow} \rho_f} \mc E (\rho_f).
\end{equation}
\eqref{eq:max_linear_vs_thermal} implies the existence of $\sigma$ because, if $\sigma$ did not exist, then the reachable energies under the two types of operations would coincide.
\eqref{eq:max_linear_vs_thermal} itself follows from a result that we present in the next section and in which the reachable energies under macrostate commuting operations are linearly upper bounded, as illustrated in Fig.\  \ref{fig:comp}. We believe that this bound may be of independent interest.

 \begin{figure}
\begin{center}
\includegraphics[scale=0.4]{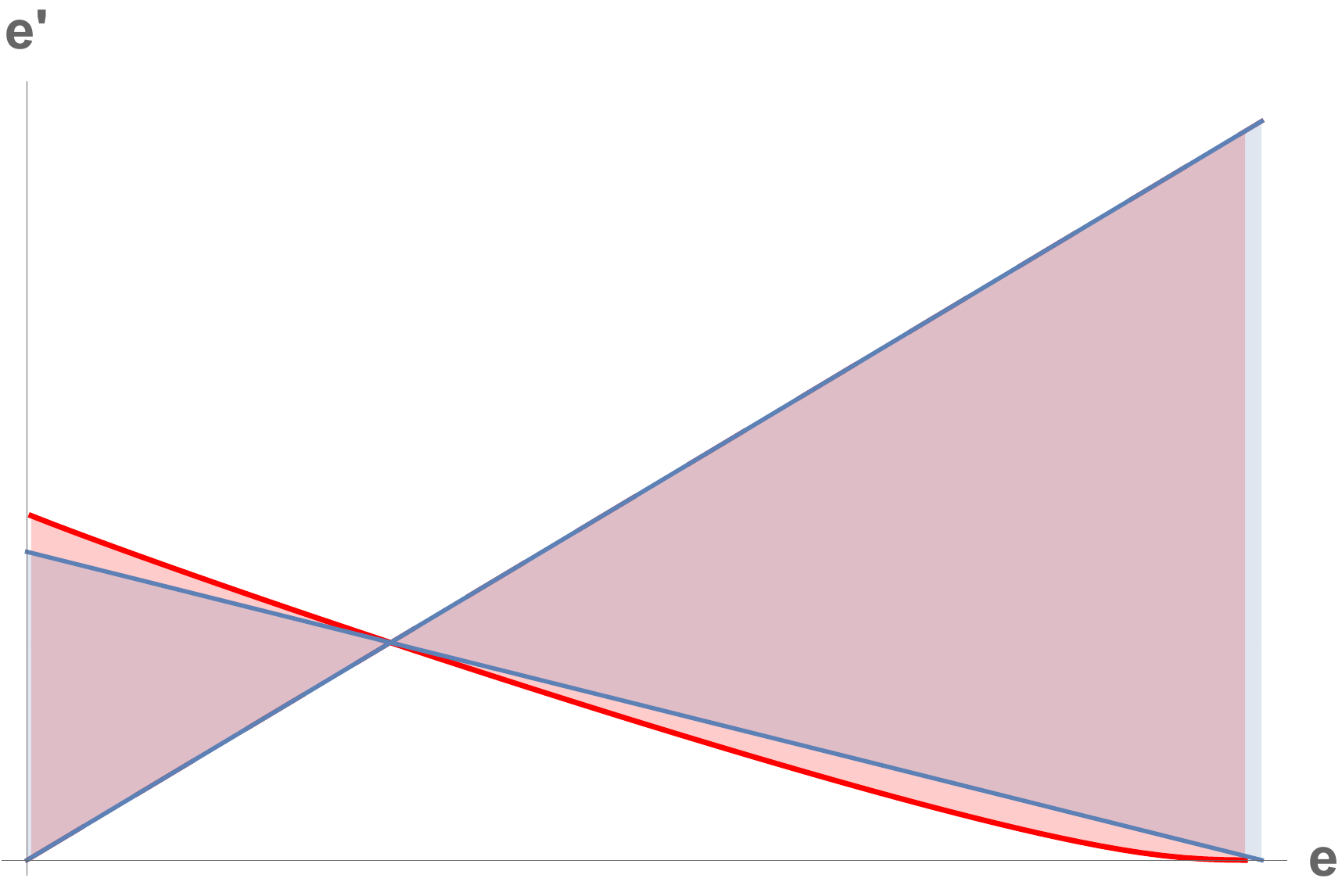}
\caption{The set of reachable final energies $e'$, given some Hamiltonian $H$ and initial energy $e$ (corresponding to a non-trivial equivalence class, $|[e]_H|>1$): The reachable final energies under commuting macrostate operations are \emph{upper} bounded by two lines (blue region) that themselves \emph{lower} bound the set of reachable energies under microstate commuting operations (red region). The results of Refs.\ \cite{Horodecki2011Fundamental} imply that, for any non-trivial $H$ and $\beta$, the red region has a non-linear boundary, which further implies that the blue region is strictly smaller than the red region. This, in turn, immediately gives \eqref{eq:max_linear_vs_thermal} and, hence, yields the breakdown of operational equivalence, \eqref{eq:notequivalent}. In this figure, the intersection point marks the thermal energy $e_\beta(H)$, that is a fixed point of all operations by definition. Note further that the two sets are bounded, in one direction, by the identity. This follows from free energy considerations (see Appendix \ref{sec:linearity}).}
\label{fig:comp}
\end{center}
\end{figure}

\section{Partial characterisation of commuting macrostate transitions}\label{sec:linearity}

 In this section we will provide a method to analyse the allowed transitions under commuting macrostate operations. We cannot in general provide a full answer to which transitions $(e,H) \overset{\beta\text{-c-mac}}{\rightarrow} \rho_f$ are possible. However, we will provide a method to bound the maximum and minimum energies of the states $\rho_f$ achievable from a given macrostate $(e,H)$.

First, we need to consider a set of transitions between macrostates that are closely related to those produced by commuting macrostate operations:
	 \begin{definition}[Macrostate GP-maps] \label{def:gpmaps}
We say that $(e', H)$ can be reached from $(e,H)$ by macrostate GP-maps, which we denote by $(e,H) \overset{\beta\text{-mGP}}{\to} (e', H)$, if for any $\epsilon>0$ there exists a completely positive, trace preserving (CPTP)-map $G$ such that
\begin{enumerate}
	\item $G(\gamma_{\beta}(H)) = \gamma_{\beta}(H),$
	\item $G(\rho) \in [e']^\epsilon_{H}, \quad \forall \rho \in [e]_{H}$ \label{item:cond_on_gpmaps}.
\end{enumerate}
	\end{definition}
Here, $[e']^\epsilon_{H}$ denotes the union of the equivalence classes that differ from $e'$ by at most $\epsilon$.
By definition of the operations, and from results in Ref.\ \cite{Janzing2000Thermodynamic}, the following chain of implications holds: For any $\rho \in [e']_H$,
\begin{align}
    (e,H) \overset{\beta\text{-c-mac}}{\rightarrow} \rho \quad &\Rightarrow (e,H) \overset{\beta\text{-mGP}}{\rightarrow} (e',H), \\
     &\Rightarrow \gamma_e(H) \overset{\beta\text{-c-mic}}{\rightarrow} \gamma_{e'}(H).
\end{align}
This in turn implies that for all $(e,H)$,
	 \begin{align}\label{eq:max_linear_vs_thermal2}
\max_{(e,H) \overset{\beta\text{-c-mac}}{\rightarrow} \rho_f} \mc E (\rho_f) &\leq \max_{(e,H) \overset{\beta\text{-mGP}}{\rightarrow} (e',H)} e' &\leq \max_{\gamma_{e}(H) \overset{\beta\text{-c-mic}}{\rightarrow} \rho_f} \mc E (\rho_f).
\end{align}

From the results of Ref.\  \cite{Horodecki2011Fundamental} it follows that the rightmost term in \eqref{eq:max_linear_vs_thermal2} is a non-linear function of $e$. In contrast, for the middle term, we find the following lemma.

	\begin{lemma}[Reachable energies under macrostate GP-maps] \label{lem:linearity}
	For any non-trivial $H$ and $\beta$, if $|[e]_H|>1$,
	\begin{align}
	    \max_{(e,H) \overset{\beta\text{-mGP}}{\rightarrow} (e',H)} e' =
	 \begin{cases}
	 	e &\text{ if } e \geq e_\beta(H) ,\\
	 	e_\beta(H) + \alpha(e) K_{\beta,H}  &\text{ if } e < e_\beta(H),
	 \end{cases}
	\end{align}
	 where $e\mapsto \alpha(e)$ is a function linear in $e$ and $K_{\beta, H}$ is a constant independent of $e$.
	 Similarly,
	 \begin{align}
	     	 	 \min_{(e,H) \overset{\beta\text{-mGP}}{\rightarrow} (e',H)} e' =
	 \begin{cases}
	 	e_\beta(H) + \alpha(e) K_{\beta, H} &\text{ if } e \geq e_\beta(H) ,\\
	 	e &\text{ if } e < e_\beta(H).
	 \end{cases}
	 \end{align}
	\end{lemma}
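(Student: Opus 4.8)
\textbf{Proof plan for Lemma~\ref{lem:linearity}.}
The plan is to characterise macrostate GP-maps by reducing them to a purely linear (affine) question about how the mean energy of an arbitrary state in $[e]_H$ can be moved under a CPTP map that fixes $\gamma_\beta(H)$. The key observation is that condition~\ref{item:cond_on_gpmaps} of Definition~\ref{def:gpmaps} must hold for \emph{all} $\rho\in[e]_H$ simultaneously, with the \emph{same} map $G$. Since $[e]_H$ is the intersection of the affine hyperplane $\{\rho : \Tr(\rho H) = e\}$ with the state space, and since the function $\rho\mapsto\Tr(G(\rho)H)$ is affine, the requirement that $G$ send all of $[e]_H$ into a small energy window $[e']^\epsilon_H$ forces $\Tr(G(\rho)H)$ to be (almost) constant on $[e]_H$. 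The first step is therefore to show that this, together with $G(\gamma_\beta(H))=\gamma_\beta(H)$ and the freedom to also dephase in the energy eigenbasis (which does not change mean energies, cf. Appendix~\ref{sec:reducing_diagonal}), reduces the problem to: over which values $e'$ does there exist a stochastic matrix on the energy eigenvalues fixing the Gibbs distribution $\gamma_\beta(H)$ and mapping \emph{every} distribution with mean $e$ to one with mean $e'$?

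The second step is to solve this affine problem explicitly. Write $\vec p$ for a probability vector on the eigenvalues $\{E_i\}$ of $H$, $\vec\gamma$ for the Gibbs vector, and let $G$ act as a column-stochastic matrix with $G\vec\gamma=\vec\gamma$. The condition that $\langle H\rangle_{G\vec p}$ be independent of $\vec p$ on the slice $\langle H\rangle_{\vec p}=e$ means the affine functional $\vec p\mapsto \vec E^{\mathsf T}G\vec p$ is constant on that slice, i.e. $\vec E^{\mathsf T}G = \lambda\,\vec E^{\mathsf T} + \mu\,\vec 1^{\mathsf T}$ for scalars $\lambda,\mu$; evaluating on $\vec\gamma$ fixes $\mu$ in terms of $\lambda$ and $e_\beta(H)$, giving $e' = \lambda e + (1-\lambda)e_\beta(H)$. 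So the reachable $e'$ lie on the line through the fixed point $(e_\beta(H),e_\beta(H))$ with slope $\lambda$, and the whole content of the lemma is: which slopes $\lambda$ are realisable by a genuine stochastic $G$ fixing $\vec\gamma$? One shows $\lambda\in[0,\lambda_{\max}]$ with $\lambda=1$ always available (take $G=\id$), $\lambda=0$ available (the map sending everything to $\gamma_\beta(H)$, which has the right fixed point and collapses all energies to $e_\beta(H)$), and that for $e<e_\beta(H)$ the maximal $e'$ is obtained by pushing toward the top eigenvalue $E_{\max}$ as hard as the Gibbs-fixing constraint permits, yielding the affine expression $e_\beta(H)+\alpha(e)K_{\beta,H}$ with $\alpha(e)$ linear in $e$ (it is essentially $e_\beta(H)-e$ up to normalisation) and $K_{\beta,H}$ built from $E_{\max}$, $e_\beta(H)$ and the Gibbs weights; the case $e\ge e_\beta(H)$ for the max gives just $e$ because increasing the mean energy further would violate the Gibbs fixed-point constraint (the line cannot be pushed past slope $1$ upward on that side), and symmetrically for the minimum. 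The hypothesis $|[e]_H|>1$ is what guarantees there is genuine freedom (a non-degenerate slice) so that the constancy argument has bite.

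The third step is to discharge the $\epsilon$'s and the passage between $G$ acting on diagonal states versus general states: one argues that allowing an $\epsilon$-window in condition~\ref{item:cond_on_gpmaps} only relaxes the affine constancy to approximate constancy, which in the limit $\epsilon\to 0$ recovers the exact characterisation, and that restricting to energy-diagonal $G$ is without loss of generality by pre- and post-composing with the dephasing channel $\mc D_H$, which fixes $\gamma_\beta(H)$ and preserves all mean energies. One also records that the extremal maps constructed are CPTP (they are classical stochastic maps embedded as measure-and-prepare channels), so they indeed qualify as macrostate GP-maps.

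The main obstacle I expect is the second step: proving that the \emph{maximal} realisable slope (equivalently the constant $K_{\beta,H}$ and the linear function $\alpha$) is exactly what is claimed, i.e. matching the upper bound from the stochasticity-plus-Gibbs-fixed-point linear program with an explicit optimal $G$. The upper bound side is a small linear-programming/duality argument on the polytope of stochastic matrices fixing $\vec\gamma$; the achievability side requires exhibiting the optimal channel (morally, transport as much weight as possible to $E_{\max}$ while keeping $\vec\gamma$ fixed) and checking it is a legitimate channel that works uniformly over the whole slice $[e]_H$. Getting the bookkeeping right so that the answer comes out \emph{affine} in $e$ on each side of $e_\beta(H)$ — rather than merely piecewise-linear-looking — is where the care is needed, but conceptually it is forced: the set of reachable $(e,e')$ is the image of a polytope under a linear map, intersected with the fixed-point line, hence an interval of slopes, hence affine in $e$.
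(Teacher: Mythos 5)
Your plan is, at its core, the paper's own argument written in the Heisenberg picture. The requirement that the affine functional $\rho\mapsto\Tr(H\,G(\rho))$ be (almost) constant on the slice $[e]_H$ is exactly the paper's Lemma \ref{lem:nton} (``$G(\mathcal{N})\subseteq\mathcal{N}$'' for the space $\mathcal{N}$ of traceless diagonal matrices orthogonal to $H$); your dual condition $\vec E^{\mathsf T}G=\lambda\,\vec E^{\mathsf T}+\mu\,\vec 1^{\mathsf T}$ with $\mu=(1-\lambda)e_\beta(H)$ is precisely their expansion $\rho=\gamma_\beta(H)+\alpha(e)(H-T)+N(\rho)$; and both routes end with ``the reachable $e'$ equal $e_\beta(H)$ plus $(e-e_\beta(H))$ times a slope ranging over an interval whose relevant endpoint is the value of a linear program''. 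The reduction to diagonal states via dephasing, the role of $|[e]_H|>1$, and the observation that only the endpoints of the slope interval matter are all as in the paper. One point where your route is arguably cleaner: the paper pins down the $e\ge e_\beta(H)$ branch (slope $1$) by invoking monotonicity of the free energy under Gibbs-preserving maps, whereas in your formulation it follows elementarily from stochasticity, since $\lambda E_j+(1-\lambda)e_\beta(H)$ must lie in $[E_{\min},E_{\max}]$ for every eigenvalue $E_j$ and $e_\beta(H)$ lies strictly between $E_{\min}$ and $E_{\max}$.

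Two points in your second step need repair. First, the claimed slope interval $\lambda\in[0,\lambda_{\max}]$ is inconsistent with your own next clause: if $\lambda\ge 0$ always, then from $e<e_\beta(H)$ the largest reachable energy is $e_\beta(H)$ itself (attained by total thermalisation, $\lambda=0$), and there is no ``pushing toward $E_{\max}$''. The correct statement is $\lambda\in[\lambda_{\min},1]$ with $\lambda_{\min}\le 0$ and generically strictly negative (a two-level computation with Gibbs weights $\gamma_0,\gamma_1$ already gives $\lambda_{\min}=-\gamma_1/\gamma_0$); it is this negative endpoint that produces the constant $K_{\beta,H}$ and lets the maximum exceed $e_\beta(H)$ when $e<e_\beta(H)$, exactly as in the paper's Fig.~\ref{fig:comp}. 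Second, you should not expect a closed form for $K_{\beta,H}$ built from $E_{\max}$ and the Gibbs weights: the paper explicitly leaves $K_{\beta,H}$ as the optimal value of a linear program and does not claim the greedy ``transport everything to $E_{\max}$'' channel is optimal. For the lemma as stated this is immaterial --- all that is needed is that the extremal slope is a constant independent of $e$, which your image-of-a-polytope argument already delivers --- so the ``main obstacle'' you identify is not actually an obstacle to proving the statement.
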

This lemma characterizes the set of reachable energies under macrostate GP-maps, and hence upper and lower bounds the possible state transitions under commuting macrostate and microstate operations respectively. As discussed below, the constant $K_{\beta,H}$ can easily be evaluated as a linear program. With respect to \eqref{eq:max_linear_vs_thermal2}, Lemma \ref{lem:linearity} and the results from Ref.\  \cite{Horodecki2011Fundamental} together imply that the second inequality in \eqref{eq:max_linear_vs_thermal2} has to be strict and hence that \eqref{eq:max_linear_vs_thermal} holds.

\subsection{Proof of Lemma \ref{lem:linearity}} % (fold)
\label{sub:proof_of_lemma_ref}

Denote the set of macrostate GP-maps for a given initial energy $e$ as $\mc G_e$. First, note that just like in the previous proofs, we need to consider only microstates $\rho \in [e]^{\text{diag}}_{H}$ that are diagonal in the eigenbasis of $H$, because the decoherence map $\mc U_{dec.}$ defined in \eqref{eq:decoherence_map} is clearly a macrostate GP-map (mapping a macrostate to itself). Next, let
\begin{align}
    \N = \{A | \text{diag}(A) = A  \wedge \tr(H^\dagger A) = 0 \wedge \tr(A) = 0 \}
\end{align}
be the space of traceless, diagonal matrices that are orthogonal to $H$, for which $\text{dim}(\N)= d-2$. Further, let $T$ be the matrix that is orthogonal to both $H$ and $\mc N$ and for which $\tr(H)=\tr(T)$. This matrix always exists. Clearly, if $\{ N_i\}_{i=1}^{d-2}$ is some orthogonal basis of $\N$, then $\{H,T,N_1,\dots, N_{d-2}\}$ form a complete basis of the diagonal sector. For this reason, we can expand any diagonal state $\rho$ as
\begin{align}
    \rho = \gamma_{e}(H) + \alpha(e) (H - T) + N(\rho), \label{eq:expand}
\end{align}
where
\begin{equation}
\alpha(e) =\frac{e - e_\beta(H)}{\tr(H^2)},
\end{equation}
$N(\rho) \in \N$. Furthermore, by construction, in this expansion, any two states from the same equivalence class differ only by an element in $\N$. This expansion is useful because it allows us to show the following lemma.

	\begin{lemma}[Characterising initial states in macrostate GP-maps] \label{lem:nton}
	For non-trivial $H$, a CPTP-map satisfies condition \ref{item:cond_on_gpmaps} from Definition\ \ref{def:gpmaps} iff
	$G(\mathcal{N}) \subseteq \N$.
	\end{lemma}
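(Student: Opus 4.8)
The plan is to prove both implications by combining the affine structure of the equivalence class $[e]_H$ with the linearity and trace preservation of $G$. As in the previous proofs, I would first post-compose $G$ with the decoherence map $\mc{D}_H$ of \eqref{eq:decoherence_map}: this leaves the fixed-point constraint $G(\gamma_\beta(H))=\gamma_\beta(H)$ untouched, does not affect condition \ref{item:cond_on_gpmaps} (dephasing preserves the mean energy of every state), and makes $G$ map the diagonal sector into itself, so one may assume throughout that $G$ acts on diagonal matrices. The geometric input is the following: for any $\rho,\rho'\in[e]^{\mathrm{diag}}_H$ the difference $\rho-\rho'$ is diagonal, traceless and orthogonal to $H$, hence $\rho-\rho'\in\N$; and, conversely, since $\gamma_e(H)$ has full support it is a relative-interior point of the convex set $[e]^{\mathrm{diag}}_H$, so the affine hull of $[e]^{\mathrm{diag}}_H$ is exactly $\gamma_e(H)+\N$. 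Thus $\N=\mathrm{span}\{\rho-\rho':\rho,\rho'\in[e]^{\mathrm{diag}}_H\}$. Non-triviality of $H$ enters here: it makes the constraints $\tr(A)=0$ and $\tr(HA)=0$ independent on the diagonal sector, so that $\dim\N=d-2$ and the matrix $T$ of \eqref{eq:expand} exists.

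For the implication $\Leftarrow$, assume $G(\N)\subseteq\N$. Fixing a reference state $\rho_0\in[e]^{\mathrm{diag}}_H$, any $\rho\in[e]^{\mathrm{diag}}_H$ can be written $\rho=\rho_0+N$ with $N=\rho-\rho_0\in\N$, so linearity gives $G(\rho)=G(\rho_0)+G(N)$ with $G(N)\in\N$, and therefore $\mc{E}(G(\rho))=\mc{E}(G(\rho_0))+\tr(HG(N))=\mc{E}(G(\rho_0))=:e'$, independently of $\rho$. Since $G$ is CPTP, each $G(\rho)$ is a density matrix, so $G(\rho)\in[e']_H\subseteq[e']^\epsilon_H$, which is exactly condition \ref{item:cond_on_gpmaps}.

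For the implication $\Rightarrow$, assume $G$ satisfies condition \ref{item:cond_on_gpmaps}. It suffices to treat the exact case; the statement for a fixed $\epsilon>0$ follows by a compactness argument (the set of CPTP maps is compact and $G\mapsto\mc{E}(G(\rho))$ continuous, so from maps valid as $\epsilon\to0$ one extracts a limit obeying the exact condition). In the exact case, condition \ref{item:cond_on_gpmaps} forces $\mc{E}(G(\rho))=e'$ for all $\rho\in[e]^{\mathrm{diag}}_H$, hence $\tr(HG(\rho-\rho'))=0$ for all such $\rho,\rho'$; by the spanning property and linearity, $\tr(HG(N))=0$ for every $N\in\N$. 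Together with $G(N)$ being diagonal (by the reduction) and traceless ($\tr G(N)=\tr N=0$ by trace preservation), this gives $G(N)\in\N$, i.e.\ $G(\N)\subseteq\N$.

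The step I expect to demand the most care is the geometric claim $\N=\mathrm{span}\{\rho-\rho':\rho,\rho'\in[e]^{\mathrm{diag}}_H\}$, since it is what powers the $\Rightarrow$ direction: one must verify that $\gamma_e(H)$ really lies in the relative interior of $[e]^{\mathrm{diag}}_H$ — which uses that it has strictly positive diagonal entries, so that a full $\N$-neighbourhood of it remains positive while automatically preserving trace and energy — together with the dimension count $\dim\N=d-2$ afforded by non-triviality of $H$. Everything else (linearity, trace preservation, the decoherence reduction, and the compactness step used to dispose of $\epsilon$) is routine.
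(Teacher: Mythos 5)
Your proof is correct and follows essentially the same route as the paper's: both directions rest on the linearity of $G$ together with the fact that differences of states in $[e]^{\mathrm{diag}}_H$ span $\N$, which the paper establishes by noting that $\N_e$ has the topology of a ball $B_{d-3}$ and your argument obtains via the relative-interior position of the full-support state $\gamma_e(H)$ — the same geometric fact in slightly cleaner clothing. Your explicit handling of the dephasing reduction and of the $\epsilon\to 0$ limit by compactness tightens two points the paper leaves implicit, but does not change the approach.
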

	\begin{proof}
	\begin{itemize}
    \item[$\Leftarrow:$] Suppose there exists a map $G$ and some state $\rho \in [e]^{\text{diag}}_{H}$ such that
    \begin{equation}
    	G(\rho) \in [e']_{H}.
    \end{equation}
    If $G[\N] \subseteq \N$, then for any other state $\rho' \in [e]^{\text{diag}}_{H}$,
    \begin{equation}
    \begin{split}
        \mc E(G(\rho')) &= \mc E(G(\rho)) + \mc E(G(N)) \\
        &= e' + \mc E(N) \\
        &= e',
    \end{split}
    \end{equation}
    and hence $G$ satisfies condition \ref{item:cond_on_gpmaps}.
    \item[$\Rightarrow:$]
Suppose that $G \in \mc G_e$. Then, for any $\rho, \rho' \in [e]^{\text{diag}}_{H}$, by \eqref{eq:expand}
\begin{align}
	\rho - \rho' &= N ,\\
	G(\rho) - G(\rho') &= N',
\end{align}
and hence, by the linearity of CPTP-maps
\begin{align}
    G(N) &= G(\rho - \rho') \\
    	&= G(\rho) - G(\rho') \nonumber \\
    	&= N'.\nonumber
\end{align}
This implies that $G[\N_{e}] \subseteq \N$, where
\begin{equation}
	\N_{e} = \{N \in \N | \exists \rho, \rho' \in [e]^{\text{diag}}_{H}: \rho + N = \rho'\}.
\end{equation}
To expand this to the whole of $\N$, note that for non-trivial equivalence classes, $\mc G_e = \mc G_{e'}$ , $\N_{e}$ has the topology of the ball $B_{d-3}$, which implies that there exists a complete $(d-2)$-dimensional basis $\{N_i\}$ of $\N_{e}$. Moreover, this basis also constitutes a basis for $\N$. Hence, for non-trivial macrostates, $G[\N_e] \subseteq \N$ implies $G[\N] \subseteq \N$.
\end{itemize}
	\end{proof}

Note that the above proof only works, if the initial equivalence class has more than one member since otherwise $\N_{e}$ consists only of the null-vector and does not have the required topological structure. In the remainder, we therefore assume that the initial energies correspond to non-trivial equivalence classes.

A corollary of Lemma \ref{lem:nton} is that the set of macrostate GP-maps is the same, regardless of the initial energy, $\mc G_e = \mc G_{e'}$. This allows us to drop the index in the following. Then, by \eqref{eq:expand} we have
\begin{align}
    \max_{(e,H) \overset{\beta\text{-mGP}}{\rightarrow} (e',H)} \!\!\!\!\!\!\!\!e' &= \max_{G \in \mc G} \mc E(G(\rho)), \rho \in [e]^{\text{diag}}_H \\
    &= \max_{G \in \mc G} \mc E(\gamma_{e}(H) + \alpha(e) G(H - T) + G(N(\rho))) \nonumber \\
    &= e_\beta(H) + \max_{G \in \mc G} \alpha(e) \mc E(G(H - T)).\nonumber
\end{align}
Finally, note that
\begin{eqnarray}
\max_{G \in \mc G} \alpha(e) \mc E(G(H - T)) &=\nonumber \\
\begin{cases}
	\alpha(e) \max_{g \in \mc G} \mc E(G(H - T)), &\text{ if } e \geq e_\beta(H), \\
	\alpha(e) \min_{g \in \mc G} \mc E(G(H - T)), &\text{ if } e < e_\beta(H)	,
\end{cases}
\end{eqnarray}
because $\alpha(e)$ flips sign around $e_\beta(H)$. Defining the constants
\begin{align}
F_{\beta, H} &= \max_{g \in \mc G} \mc E(G(H - T)) ,\\
K_{\beta, H} &= \min_{g \in \mc G} \mc E(G(H - T)),
\end{align}
we then have
\begin{align}\label{eq:max_prob}
    \max_{(e,H) \overset{\beta\text{-mGP}}{\rightarrow} (e',H)} \!\!\!\!\!\!\!\!e' =
    \begin{cases}
	e_\beta(H) + \alpha(e) F_{\beta, H}, &\text{ if } e \geq e_\beta(H) ,\\
	e_\beta(H) + \alpha(e) K_{\beta, H}, &\text{ if } e < e_\beta(H).
\end{cases}
\end{align}
Similarly,
\begin{align} \label{eq:min_prob}
\min_{(e,H) \overset{\beta\text{-mGP}}{\rightarrow} (e',H)}\!\!\!\!\!\!\!\! e' &=
\begin{cases}
	 e_\beta(H) + \alpha(e) K_{\beta, H} , \text{ if } e \geq e_\beta(H), \\
	e_\beta(H) + \alpha(e) F_{\beta, H}, \text{ if } e < e_\beta(H).\\
\end{cases}
\end{align}
In the final step, we will now discuss the values of $F_{\beta,H}$ and $K_{\beta,H}$. The former can be found analytically to be such that
\begin{align}
    e_\beta(H) + \alpha(e) F_{\beta, H} = e. \label{eq:app:upper}
\end{align}
To see this, note that the upper term in \eqref{eq:max_prob} denotes the maximum reachable energy if the initial energy lies \emph{above} the thermal energy (see Fig.\ \ref{fig:comp}). This is trivially is at least $e$ (because the identity is always a macrostate GP-map). Now, if it was the case that
\begin{align}
    e_\beta(H) + \alpha(e) F_{\beta, H} > e, \label{eq:above_e}
\end{align}
then this would imply that there exists a GP-map $G$ such that
\begin{align}
    \mc E(G(\gamma_{e}(H))) > e.
\end{align}
In this case, $G$ would have certainly increased the free energy $\Delta F(\rho) := S(\rho || \gamma_\beta(H))$ of the system, by monotonicity of the free energy of thermal states in $e$: For any $e' > e, \rho \in [e']_H$,
\begin{align}
   \Delta F(\gamma_{e}(H)) & < \Delta F(\gamma_{\beta_S(e')}(H)) &\leq \Delta F(\rho).
\end{align}
Results from Ref.\  \cite{Janzing2000Thermodynamic} imply that no GP-map can increase the free energy of the system, so that \eqref{eq:above_e} cannot be true, and hence $F_{\beta,H}$ is determined by \eqref{eq:app:upper}.

Regarding $K_{\beta,H}$, it cannot in general be fixed analytically and depends on the $H$ and $\beta$. However, it can readily be computed with a linear program. This is because for any initial energy $e$, the optimization problems \eqref{eq:max_prob} and \eqref{eq:min_prob} can be cast as linear programs. This is true since achievable state transitions under general GP-maps can be formulated as an LP \cite{RenesCost, Janzing2000Thermodynamic}, and Lemma \ref{lem:nton} shows that the only further constraint on macrostate GP-maps is itself linear, namely that $\mc G(\mathcal{N}) \subseteq \N$. Finally, note also that a similar Lemma to Lemma \ref{lem:nton} can be shown to hold true for several commuting observables $\mc Q$. There, each of the observables $Q^j$ is bounded linearly, so that, in total, the reachable states will be characterized by piece-wise linear bounds, instead of a single linear bound. Since this lemma is a straightforward generalization of Lemma \ref{lem:linearity}, we omit its proof here.

\section{Macrostate and commuting macrostate operations in the macroscopic limit}  \label{app:macro}

In this section we discuss the value of the higher moments of the energy difference $X$ when performing a macrostate operation.  As stated in the main text, we assume that $H=\sum_i H^i$. We first consider the case of a system whose subsystems are uncorrelated. That is, we assume the initial system macrostate to be of the form $(e,H)=\otimes_{i=1}^N (e_i,H^i)$. The canonical ensemble state for $(e,H)$ is
\begin{equation}
\gamma_e(H)=\bigotimes_{i=1}^N\gamma_{\frac{e}{N}}(H^i).
\end{equation}
Finally, we consider a macrostate transition $(e,H) \overset{\beta-\text{mac}}{\rightarrow} \rho_f$, where we also assume that
\begin{equation}
	\rho_f = \bigotimes_{i=1}^N \rho_f^i.
\end{equation}
We are interested in the distribution $P(X)$, where $X$ is the change in energy under this macrostate transition.

To see that $P$ will be normally distributed, we implement the above transition by acting on each of the subsystems independently. By Theorem~\ref{thm:main}, we know that this is possible. In particular, by the procedure in Appendix~\ref{app:main}, we can implement the transition 
\begin{align}
    (e_i,H^i) \overset{\beta-\text{mac}}{\rightarrow} \gamma_{e}(H^i)
\end{align} as a macrostate transition, for any subsystem $i$. This produces a change in energy $X_i$ with mean $\mu_i$ and variance $\sigma_i^2$, which is finite for bounded $H_i$. Let $s_N^2 = \sum_i^N \sigma_i^2$. Then, by the \emph{Lyapunov Central Limit Theorem},
we have that the total change in energy, $X = \sum_i X_i$, converges in distribution to a normal distribution,
\begin{align} \label{eq:clt}
    \lim_{N \to \infty} X \overset{d}{\to} \mc N(\sum_i \mu_i = e'-e, s_N^2),
\end{align}
with $e'$ being the final energy of the system, if the following condition is satisfied: There exists a $\delta > 0$ such that
\begin{align}
    \lim_{N \to \infty} \frac{1}{s_N^{2+ \delta}} \sum_i^N \mathbb{E}[|X_i - \mu_i|^{2 + \delta}] = 0.
\end{align}
Choosing $\delta = 1$ and since $s_N^2 = O(N)$, this is satisfied if $\sum_i^N \mathbb{E}[|X_i - \mu_i|^{2 + \delta}]  = O(N)$. This is a physically reasonable assumption to make.  Now, from \eqref{eq:clt} it follows that the energy change \emph{per subsystem} is normally distributed as
\begin{align} \label{eq:clt2}
    \lim_{N \to \infty}\frac{X}{N} \overset{d}{\to} \mc N(e'-e, \frac{s_N^2}{N}).
\end{align}
In terms of the higher moments this means the following. Let
\begin{align}
    \mu_n(X) := \mathbb{E}[(X - \mu)^n], \quad n \in {1, 2, \dots}
\end{align}
be the moments of a random variable $X$. If this $X$ is normally distributed with variance $\sigma^2$, then independent of its mean the following is true and can be verified by evaluation.
\begin{align} \label{eq:moments_prop}
    \mu_{2n}(X) = \sigma^{2n}(2n -1)!! ,
    & \quad \mu_{2n +1}(Y) = 0.
\end{align}
Combining this with \eqref{eq:clt2} we find that the higher moments \emph{per subsystem} vanish in the macroscopic limit:
 \begin{align}
       \lim_{N \to \infty}  \mu_{2n}(X/N)&= \lim_{N \to \infty}   \left(\frac{s_N}{\sqrt{N}}\right)^{2n}(2n -1)!! & = 0.
 \end{align}
As stated in the main text, this can be seen as an argument in favour of the assignment of the ensemble to macrostates, for large weakly-correlated systems, as long as one tolerates violations of \eqref{eq:exactcommutation} -- as measured by the higher moments
-- that are negligible in comparison with the typical energy scales involved in the thermodynamic operation. Of course, a similar argument can be made for the case of weakly correlated systems. However, for conceptual clarity we here restricted to the independent case.rational equivalence is regained for those subsystems.

\end{document}